\PassOptionsToPackage{unicode}{hyperref}
\PassOptionsToPackage{hyphens}{url}
\PassOptionsToPackage{dvipsnames,svgnames,x11names}{xcolor}
\documentclass[12pt,fleqn]{article}

\usepackage{amsmath,amssymb}
\usepackage{bm}
\usepackage{iftex}
\usepackage{float}
\usepackage{ulem}
\usepackage{amsthm}

\usepackage{textcomp}

\usepackage{xr}
\makeatletter
\newcommand*{\addFileDependency}[1]{
  \typeout{(#1)}
  \@addtofilelist{#1}
  \IfFileExists{#1}{}{\typeout{No file #1.}}
}
\makeatother

\newcommand*{\myexternaldocument}[1]{%
    \externaldocument{#1}%
    \addFileDependency{#1.tex}%
    \addFileDependency{#1.aux}%
}

\myexternaldocument{SM}
\usepackage[ruled,vlined,linesnumbered]{algorithm2e}

\SetCommentSty{mycommfont}

\newtheorem{proposition}{Proposition}

\usepackage{lmodern}
\ifPDFTeX\else  
\fi
\IfFileExists{upquote.sty}{\usepackage{upquote}}{}
\IfFileExists{microtype.sty}{
  \usepackage[]{microtype}
  \UseMicrotypeSet[protrusion]{basicmath} 
}{}
\makeatletter
\@ifundefined{KOMAClassName}{
  \IfFileExists{parskip.sty}{%
    \usepackage{parskip}
  }{
    \setlength{\parindent}{0pt}
    \setlength{\parskip}{6pt plus 2pt minus 1pt}}
}{
  \KOMAoptions{parskip=half}}
\makeatother
\usepackage{xcolor}
\makeatletter
\ifx\paragraph\undefined\else
  \let\oldparagraph\paragraph
  \renewcommand{\paragraph}{
    \@ifstar
      \xxxParagraphStar
      \xxxParagraphNoStar
  }
  \newcommand{\xxxParagraphStar}[1]{\oldparagraph*{#1}\mbox{}}
  \newcommand{\xxxParagraphNoStar}[1]{\oldparagraph{#1}\mbox{}}
\fi
\ifx\subparagraph\undefined\else
  \let\oldsubparagraph\subparagraph
  \renewcommand{\subparagraph}{
    \@ifstar
      \xxxSubParagraphStar
      \xxxSubParagraphNoStar
  }
  \newcommand{\xxxSubParagraphStar}[1]{\oldsubparagraph*{#1}\mbox{}}
  \newcommand{\xxxSubParagraphNoStar}[1]{\oldsubparagraph{#1}\mbox{}}
\fi
\makeatother

\usepackage{longtable,booktabs,array}
\usepackage{calc} 
\usepackage{etoolbox}
\makeatletter
\patchcmd\longtable{\par}{\if@noskipsec\mbox{}\fi\par}{}{}
\makeatother
\IfFileExists{footnotehyper.sty}{\usepackage{footnotehyper}}{\usepackage{footnote}}
\makesavenoteenv{longtable}
\usepackage{graphicx}
\makeatletter
\def\maxwidth{\ifdim\Gin@nat@width>\linewidth\linewidth\else\Gin@nat@width\fi}
\def\maxheight{\ifdim\Gin@nat@height>\textheight\textheight\else\Gin@nat@height\fi}
\makeatother
\setkeys{Gin}{width=\maxwidth,height=\maxheight,keepaspectratio}
\makeatletter
\def\fps@figure{htbp}
\makeatother

\makeatletter
\@ifpackageloaded{caption}{}{\usepackage{caption}}
\AtBeginDocument{%
\ifdefined\contentsname
  \renewcommand*\contentsname{Table of contents}
\else
  \newcommand\contentsname{Table of contents}
\fi
\ifdefined\listfigurename
  \renewcommand*\listfigurename{List of Figures}
\else
  \newcommand\listfigurename{List of Figures}
\fi
\ifdefined\listtablename
  \renewcommand*\listtablename{List of Tables}
\else
  \newcommand\listtablename{List of Tables}
\fi
\ifdefined\figurename
  \renewcommand*\figurename{Figure}
\else
  \newcommand\figurename{Figure}
\fi
\ifdefined\tablename
  \renewcommand*\tablename{Table}
\else
  \newcommand\tablename{Table}
\fi
}
\@ifpackageloaded{float}{}{\usepackage{float}}
\floatstyle{ruled}
\@ifundefined{c@chapter}{\newfloat{codelisting}{h}{lop}}{\newfloat{codelisting}{h}{lop}[chapter]}
\floatname{codelisting}{Listing}

\makeatother
\makeatletter
\@ifpackageloaded{caption}{}{\usepackage{caption}}
\@ifpackageloaded{subcaption}{}{\usepackage{subcaption}}
\makeatother

\ifLuaTeX
  \usepackage{selnolig}  
\fi
\usepackage[]{natbib}
\usepackage{cancel}
\usepackage{bookmark}
\usepackage{titlesec}

\titleformat{\subsection}
  {\large\bfseries\itshape}
  {\thesubsection}
  {1em}
  {}
\titleformat{\subsubsection}
  {\itshape}
  {\thesubsubsection}
  {1em}
  {}

\IfFileExists{xurl.sty}{\usepackage{xurl}}{} 
\hypersetup{
  pdftitle={Title},
  pdfauthor={Author 1; Author 2},
  pdfkeywords={3 to 6 keywords, that do not appear in the title},
  colorlinks=true,
  linkcolor={blue},
  filecolor={Maroon},
  citecolor={Blue},
  urlcolor={Blue},
  pdfcreator={LaTeX via pandoc}}

\newcommand{\anon}{1}

\usepackage{setspace}

\begin{document}

\onehalfspacing 


\if1\anon
{
  \title{\bf Parameter estimation in Conditional Sequential Monte Carlo algorithms through Particle Learning}
  \author{Alfonso Diz-Lois Palomares\hspace{.2cm}\\
    Department of Mathematics, University of Oslo\\
    and \\
    Geir Storvik \\
    Department of Mathematics, University of Oslo}
  \maketitle
} \fi

\if0\anon
{
  \bigskip
  \bigskip
  \bigskip
  \begin{center}
    {\LARGE\bf Title}
\end{center}
  \medskip
} \fi

\bigskip
\begin{abstract}
In this work, we explore particle learning strategies for the joint estimation of static parameters and latent states within conditional sequential Monte Carlo (CSMC) algorithms. Building on this idea, we propose the p(parameter)-CSMC algorithm, which incorporates both parameter learning and ancestor sampling,  leading to much better mixing properties compared to (particle) Gibbs sampling in settings where strong internal correlations may challenge effective exploration. We also include two applications in the context of a branching process model: one using synthetic data, where we estimate the infectivity profile while assuming the reproductive number to be known, and another using real data, where we address the joint inference of the reproductive number and the infectivity profile based on daily hospital incidence from the arrival of the SARS-CoV-2 lineage B.1.1.7 (Alpha) in Norway in February 2021. We show that, in these settings, performance is dramatically enhanced, with substantially faster mixing and markedly reduced autocorrelation compared with standard particle Gibbs.
\end{abstract}

\noindent%
{\it Keywords:} Sequential Monte Carlo (SMC), CSMC, MCMC, Parameter estimation, Ancestor sampling, Sufficient Statistics, Particle learning.
\vfill

\newpage

\section{Introduction}\label{sec-intro}

State Space Models (SSM) provide a statistical framework to model dynamic systems that are partially observed over time. They have found widespread applications across multiple fields including Physics \citep{ioanna2026applications}, Economics \citep{durbin2012time}, Ecology \citep{auger2021guide} and Epidemiology \citep{birrell2018evidence,storvik2023sequential} just to name a few.

 In its most basic form, these models are characterized by two main components: the latent space $x_t$ which represents the underlying, unobserved, process of the system at time $t$, and the observations, denoted by $y_t$, which are the measurable outputs influenced by the latent processes. Typically, the latent process follows a Markov structure, while the observations are independent, conditionally on the latent process.
 
In situations where inference about the latent process  is of interest, Sequential Monte Carlo (SMC) methods, commonly known as particle filters (PF), provide a robust solution for performing online inference~\citep[see][for a general introduction]{doucet2001introduction}. For high-dimensional latent variables, these methods can be challenging to apply, however recent approaches shows promise~\citep{malory2021bayesian,finke2023conditional,corenflos2024particle}. For the case when Kalman filter dynamics are assumed in a subset of the state space, different strategies can be utilized to exploit the internal structure, particularly via marginalization/Rao–Blackwellization, both with and without unknown parameters \citep{murphy2001rao,schon2005marginalized,lindsten2015rao,kok2024rao}. 

The presence of unknown static parameters $\theta$, imposes an additional challenge. In such a setting, different approaches are needed to perform inference in the combined space (for a comprehensive review, see \cite{kantas2015particle, luengo2020survey}). 

A widely used strategy is to treat static parameters as part of the state and introduce artificial dynamics \citep{kitagawa1998self,liu2001combined}. This improves mixing and alleviates particle degeneracy because the particles can explore parameter space and resampling remains effective. However, this comes at the cost of turning the algorithm into an approximation.

Nested SMC methods, like the $\text{SMC}^2$ \citep{chopin2013smc2}, place an outer SMC in the parameter dimension while each parameter particle carries an inner particle filter to estimate the likelihood. This enables sequential Bayesian learning of the posterior distribution of the parameters $p(\theta|y_{1:t})$ with resample–move steps for rejuvenation, albeit at higher computational cost. 

The PaRIS (Particle-based Rapid Incremental Smoother) algorithm \citep{westerborn2014efficient,cardoso2023state} is a complementary alternative for handling unknown parameters and mitigating particle path degeneracy in certain state-space models. PaRIS addresses degeneracy through smoothing of additive functionals, and is particularly well-suited for non-Bayesian estimation schemes such as maximum likelihood and online Expectation-Maximization (EM). 

When conjugacy is available, Particle Learning \citep{storvik2002particle,fearnhead2002markov,Carvalho2010PLS} is particularly attractive as it propagates sufficient statistics and updates parameter posteriors analytically within the particle system, delivering fast unbiased updates and often superior scalability in practice. As pointed out in \citet{andrieu2005line}, the problem with this approach is that the SMC estimates of
the sufficient statistics necessary to perform the parameter updates
degrade as the time horizon $T$ increases because they are based on the particle approximation of the joint posterior distribution for all the latent variables $x_1,...,x_T$, which degenerates with time. In some cases, marginalization of the parameters is possible, but at the cost of introducing more complex non-Markovian dependence structures within the remaining latent variables. This can lead to a high computational burden~\citep{wigren2019parameter}.

As an offline alternative, Particle MCMC (PMCMC) methods use particle filters within MCMC to sample iteratively from the joint posterior of states and parameters using the full dataset \citep{andrieu2010particle}. Two principal PMCMC schemes are commonly employed: Particle Marginal Metropolis–Hastings (PMMH), which embeds a particle-filter used to get an unbiased estimate of the marginal likelihood within a Metropolis–Hastings update for $\theta$; and Particle Gibbs  (PG), which leverages Conditional SMC (CSMC) to construct a Gibbs sampler alternating between latent states and parameters. Because PMMH relies on a Metropolis–Hastings acceptance step and therefore on well-tuned proposals for the parameters, PG is typically preferred whenever the full conditional of the parameters is available. However, when there is strong posterior dependence between parameters and latent states or among parameter components, Gibbs updates may mix poorly, leading to slow exploration of the posterior and highly autocorrelated samples \citep{robert2004monte}.

The Markov transition kernels within CSMC remain “well-behaved” as $T$ grows, under conditions where standard independent SMC proposals tend to suffer from degeneracy of the path and poor acceptance rates \citep{chopin2015particle}.  Due to its offline nature, its major limitation is the need for at least $\mathcal{O}(NT)$ operations per iteration, where $N$ is the number of particles (samples) used in the Monte Carlo approximation. This makes efficient implementations and strong mixing essential. Rejuvenation techniques such as backward sampling \citep{whiteley2010discussion} or ancestor sampling \citep{lindsten2014particle} improve mixing and reduce autocorrelation, changing the error’s dependence on $T$ from linear to constant. However, for non‑Markovian models these methods become costly because weight calculation raises the complexity to $\mathcal{O}(NT^2)$ \citep{lindsten2014particle}.

Recently, \citet{corenflos2025particle} presented an algorithm (the marginal Particle Gibbs (m-PG)) where the unknown parameter can be effectively marginalized out in some augmented target distribution that evaluates multiple proposals at once. These proposals are drawn conditionally independently through an auxiliary variable outside the CSMC step, and not jointly and dynamically intertwined with the latent states, as we propose here.

In this work, we present a CSMC-based framework for simultaneous inference of the latent space and the static parameters in models for which the conditional distribution of the parameters given the latent space is available. We propose a reformulation of the problem by extending the static parameter $\theta$ to a sequence of parameters $\theta_{1:T}$ with $\theta=\theta_T$ and a corresponding augmented distribution keeping the original posterior distribution as a marginal of the extended distribution. An effective CSMC-based algorithm working on this extended space is then proposed. 

We demonstrate the method's applicability when sampling the parameter $\theta$ conditional on the latent variables (and observations) is possible through some sufficient statistic. The use of sufficient statistics improves efficiency and reduces computational cost. This saving is especially important when backward or ancestor sampling is needed to mitigate degeneracy in the time dimension. We verify through experiments how our approach outperforms PGAS in settings with high internal correlations involved where the exploration can be challenging. 

The structure of the paper is as follows. In Section \ref{sec-back}, we provide the necessary background by first formulating the problem and then describing the CSMC algorithm and particle Gibbs strategies for static parameter estimation. In Section \ref{sec-meth}, we define a general framework for implementing particle learning within CSMC and introduce the p-CSMC algorithm with ancestor sampling. In Section \ref{sec-verify}, we present two applications: one using synthetic data, where we estimate the weights in a branching process model, and a more complex example where we estimate both the infectivity profile and the reproductive number during the arrival of the Alpha SARS-CoV-2 variant to Norway. We assess the performance and compare it against standard particle Gibbs. Finally, in Section \ref{sec-conc}, we discuss the results, outline the contexts in which the proposed methodology is particularly relevant, highlight its main strengths and limitations, and indicate directions for future work.

\section{Background}\label{sec-back}

\subsection*{State space models}

Let us consider some latent variables $\{x_t\}$, where each $x_t \in \mathbb{R}^d$, evolving over time through some transition probabilities/dynamic model that might depend on some parameter $\theta$. Additionally, the observations $\{y_t\}$ are connected to the latent space through some potentials/observation process:
\begin{align*}
x_t& \sim p(\cdot|\bm{x}_{0:t-1};\theta);\\
y_t&\sim p(\cdot|x_t) , \quad   
\end{align*}
for $t=1,...,T$,
where we have used $p(\cdot|\cdot)$ generically for distributions involved in the assumed model. 

We here assume $x_0$ is a known quantity.
For the sake of notational simplicity, we adopt the convention of using lowercase letters to denote random variables and do not distinguish in the notation between random variables and their realizations.
Similarly, all $x_t$, $y_t$ and $\theta$ may be vectors, but we will reserve the  boldface notation for sequences of variables $\bm{x}_{s:t}= (x_s,...,x_t)$. We assume that any unknown parameters enter the model through the transition probabilities rather than the observational model. Alternatively, it is often possible under standard conditions to reformulate the model so that the observation distribution no longer depends explicitly on those parameters. To denote this dependency, we use $p_\theta$ and $p(|\theta)$ interchangeably 

Under a first-order Markov structure where $p(x_t|\bm{x}_{1:t-1},\theta) = p(x_t|x_{t-1},\theta)$, the joint distribution of the latent variables and the observations can be factorized into the familiar sequence:
\begin{equation}
p_{\theta}(\bm{x}_{1:T}|\bm{y}_{1:T}) \propto p_{\theta}(\bm{x}_{1:T},\bm{y}_{1:T}) = 
\prod_{t=1}^T p(x_t|{x}_{t-1},\theta) 
p(y_t|x_{t}).  
\label{eq:posteriorsmc}
\end{equation}

In the presence of unknown static parameters, we consider a full Bayesian approach, including a prior $p(\theta)$ for $\theta$. In this case, 
\begin{align}
p(\bm{x}_{1:T},\theta|\bm{y}_{1:T}) \propto& p(\theta)
\prod_{t=1}^T p(x_t|x_{t-1},\theta) 
p(y_t|x_{t}) 
\label{eq:target_param}
\intertext{which also can be rewritten to}
p(\bm{x}_{1:T},\theta|\bm{y}_{1:T}) \propto& 
p(\theta|\bm x_{1:T})
\prod_{t=1}^T p(x_t|x_{1:t-1}) 
p(y_t|x_{t})
\end{align}
showing that the Markov structure in \eqref{eq:posteriorsmc} is lost.

\subsection*{The SMC/CSMC framework}

A Feynman-Kac model is defined as
\begin{align}
\mathbb{Q}_{T}(\bm{z}_{1:T})=\frac{1}{L_T} M_1(z_1)G_1(z_1)\prod_{s=2}^TM_s(z_{1:s-1},z_s)G_s(z_s)
\label{eq:simple_target}
\end{align}
where $L_T$ is the normalising constant needed for $\mathbb{Q}_T$ to be a probability measure \citep{chopin2020feynman}. In this formulation, we use a more generic $z_t$ variable, which in some cases will correspond to $x_t$ but, as we will see later, can also include other variables.

Assuming all transition kernels $M_t$ admit a density with respect to Lebesgue measure on $\mathbb{R}^d$, we can then define a sequence of unnormalized joint probabilities:
\begin{equation}
\gamma_t(\bm z_{1:t})= M_1(z_1)G_1(z_1)\prod_{s=2}^tM_s(z_{1:s-1},z_s)G_s(z_s)
\label{eq:gammas}
\end{equation}
where $\mathbb{Q}_T(\bm z_{1:T})\propto \gamma_T(\bm z_{1:T})$.
The following recursion applies:
\begin{align*}
    \gamma_t(\bm z_{1:t})=\gamma_{t-1}(\bm z_{1:t-1})M_t(\bm z_{1:t-1},z_t)G_t(z_t).
\end{align*}
As long as $\gamma_T$ is the main target, there is some flexibility in the choice of $M_s$ and $G_s$, see e.g.~\citet{guarniero2017iterated}. 

Model~\eqref{eq:posteriorsmc}, with $\theta$ known, is a special case of~\eqref{eq:simple_target} using $z_t=x_t$, $M_s(z_{1:s-1},z_s)=p(x_s|x_{s-1};\theta)$ and $G_s(z_s)=p(y_s|x_s)$.  We then have
\begin{align*}
p_\theta(\bm x_{1:t}|\bm y_{1:t})\propto \gamma_t(\bm x_{1:t})=&p_\theta(\bm x_{1:t-1}|\bm y_{1:t-1})p(x_t|\bm{x}_{t-1};\theta)p(y_t|x_t).
\end{align*}
In the case with unknown static parameters involved, we may consider the marginal distributions $M_s(\bm x_{1:t-1},x_t)=p(x_t|\bm x_{1:t-1})$, giving
\begin{align*}
 \gamma_t(\bm x_{1:t})=&p(\bm x_{1:t-1}|\bm y_{1:t-1})p(x_t|\bm{x}_{1:t-1})p(y_t|x_t)\propto p(\bm x_{1:t}|\bm y_{1:t})
\end{align*}
where now $p(x_t|\bm x_{1:t-1})=\int_{\theta}p(x_t|x_{t-1},\theta)p(\theta|\bm x_{1:t-1})d\theta$. Note that in this case, the $M_s$ distributions have a non-Markovian structure. 

Sampling can be performed through sequential Monte Carlo (SMC).
Assuming $\bm z_{1:t-1}^{(i)}$ is drawn from (the normalized version of) $\gamma_{t-1}(\bm z_{1:t-1})$ and $z_t^{(i)}$
from some proposal distribution $q(z_t|z_{1:t-1}^{(i)})$, importance weights for updating $\bm z_t^{(i)}=(\bm z_{t-1}^{(i)},z_t^{(i)})$ to a sample from $\gamma_t(\bm z_t)$
is given by
\begin{align}
w_t^{(i)} &= 
\frac{\gamma_t(\bm z^{(i)}_{1:t})}
     {\gamma_{t-1}(\bm z^{(i)}_{1:t-1})\,q_t(z^{(i)}_t|\bm z^{(i)}_{1:t-1})}
=
\frac{M_t(\bm z^{(i)}_{1:t-1},z^{(i)}_t)\,G_t(z^{(i)}_t)}
     {q_t(z^{(i)}_t|\bm z^{(i)}_{1:t-1})},\label{eq:w_z}
\end{align}
and the normalized equivalent
$W_t^{(i)} = w_t^{(i)}/\sum_{j=1}^N w_t^{(j)}$
when samples for $i=1,...,N$ are generated. If we use the transition kernel of the model as our proposal,
$q_t(z_t|\bm z_{1:t-1})=M_{t}(\bm z_{1:t-1},z_t)$, we get
$w_t^{(i)}=G_t(z_t^{(i)})$,
which corresponds to the bootstrap filter~\citep{gordon1993novel}. 



In non-Markovian models simulating from $M_t$ might be difficult. For the specific setting where $M_t$ corresponds to $p(x_t|\bm x_{1:t-1})$, in the presence of an unknown parameter $\theta$, simulation can be performed through the two-step procedure
 \begin{enumerate}
\item Simulate $\theta\sim p(\theta|\bm x_{1:t-1})$;
\item Simulate $x_t\sim p(x_t|\theta,\bm x_{1:t-1})$,
 \end{enumerate}
 which is exactly what is used in the sufficient statistic approach \citep{storvik2002particle,fearnhead2002markov}. Although ordinary SMC algorithms have several desired properties, including consistency as the number of particles $N$ increases \citep{DelMoral2004}, with large $T$ this procedure may suffer from serious degeneracy problems, in particular for the first time points~\citep{andrieu2005line}.

Another possibility is to adopt an offline approach  and target the joint distribution $p(\bm x_{1:T},\theta|y_{1:T})$ through Particle Gibbs \citep{andrieu2010particle}. The idea is to iteratively draw samples of the parameters from $p(\theta|\bm x_{1:T}^*)$, and then simulating a new path $\bm x_{1:T}$ through a CSMC sampler, with $\bm x_{1:T}^*$ as the reference path. The CSMC step is invariant with respect to the distribution $p(\bm x_{1:T}|\theta,\bm y_{1:T})$.
Algorithm~\ref{alg:PG_AS} describes such a procedure, utilizing the generic CSMC algorithm~\ref{alg:CSMC_AS}, and it will constitute the reference for comparison against the other approaches presented in the following sections. In practice, the only modification relative to ordinary SMC, except for the ancestor sampling step, is the presence of a reference trajectory $\bm z_{1:T}^{*}$ as an additional particle that is allowed to survive all the resampling steps throughout the iterations over $t$.

In this MCMC setting, the relevant notion of accuracy is the convergence of the overall Markov chain, which depends on the number of MCMC iterations. On the other hand, as the number of particles $N$ in the CSMC increases, the resulting Particle Gibbs kernel approaches that of an ideal Gibbs sampler in terms of mixing \citep{andrieu2010particle}. 

\subsection*{Backward/Ancestor sampling}

The vanilla CSMC kernel suffers from severe path degeneracy as the time horizon $T$ grows (similar to the sufficient statistics approach). This leads to highly correlated updates of $\bm z_{1:T}$ across PG iterations and poor mixing, especially for the initial states. A way to mitigate this issue is to incorporate backward-sampling moves that rejuvenate the ancestry of the reference path. Conceptually, these moves resample the ancestors of the current reference at time $t$  ($\bm z^*_{t:T}$) given the particles available at time $t-1$, thereby restoring diversity in the early part of the path while preserving the CSMC invariant distribution. One can implement this either via a backward simulation sweep after the forward pass~\citep{whiteley2010discussion} or, more simply, by performing on-the-fly backward resampling during the forward pass~\citep[ancestor sampling,][]{lindsten2014particle} which is the approach considered here.

In its general formulation, ancestor sampling 
allows the reference path to update ($\bm z_{1:t-1}^*$)  during the forward iteration by sampling one of the particles using the ancestor weights:
\begin{align}
\label{eq:weightsas}
\widetilde{w}_{t-1,T}^{(i)} \triangleq& 
w_{t-1}^{(i)}\frac{\gamma_T(\langle \bm z_{1:t-1}^{(i)},\bm z_{t:T}^*\rangle)}{\gamma_t(\bm z_{1:t-1}^{(i)})}
=\prod_{s=t}^TM_s(\langle\bm z_{1:t-1}^{(i)},\bm z_{t:s-1}^*\rangle,z_s^*)G_s(z_s^*).
\end{align}
where $\langle \bm a,\bm b\rangle$ is the concatenation of vectors $\bm a$ and $\bm b$.
The forward weight $w_{t-1}^{(i)}$ corrects for the prior probability of $\bm z^{(i)}_{t-1}$ and the ratio of the target densities can be seen as the likelihood that $\bm z_{t:T}^{*}$ originated from $\bm z^{(i)}_{1:t-1}$ \citep{lindsten2014particle},. 

In non-Markovian models, equation \eqref{eq:weightsas} leads to ancestor weights whose computational cost scales as $\mathcal{O}(NT)$ for each time-point~\citep{lindsten2014particle}. However, in those cases where the Markov property applies (e.g. there are no unknown parameters or they are fixed as in PG) the expression simplifies further to 
\[
\widetilde{w}_{t-1,T}^{(i)} \propto w_{t-1}^{(i)} M_t(z_{t-1}^{(i)},z_t),
\]
due to that the remaining terms do not depend on $i$ resulting in a computational cost that now scales to $\mathcal{O}(N)$.


\begin{algorithm}[t]
\DontPrintSemicolon
\caption{CSMC with Ancestor Sampling (CSMC-AS)}
\label{alg:CSMC_AS}
\KwIn{Reference trajectory $\bm{z}^*_{1:T}$}
\KwOut{New sample $\bm z_{1:T}$ leaving $Q_T(\bm{z}_{1:T})$ as invariant density}
Draw $z_1^{(i)} \sim q_1(\cdot)$ for $i=1,\ldots,N-1$ and
set $z_1^{(N)}=z^*_1$ \\
Compute $w_1^{(i)}$ according to~\eqref{eq:w_z} and $\widetilde w_1^{(i)}$ according to~\eqref{eq:weightsas} for $i=1,\cdots,N$ \\
\For{$t=2$ \KwTo $T$}{

Draw ancestors $\{a_t^i\}$ with $\mathbb{P}(a^i_t = k) \propto w^{(k)}_{t-1} \text{ for } i = 1, \ldots, N-1$ \\
Draw $a_t^N$ with $\mathbb{P}(a_t^N=i)\propto \widetilde{w}^{(i)}_{t-1|T}$\\
Draw $z_t^{(i)} \sim q_t(\cdot |z^{(a_t^i)}_{t-1})$ for $i=1,\ldots,N-1$ and set $z_t^{(N)}=z^*_t$ \\
Set $\bm{z}_{1:t}^{(i)}=\{\bm{z}_{1:t-1}^{(a_t^i)},z_t^{(i)}\}$ for $i=1,\ldots,N$\\
Compute $w_t^{(i)}$ according to \eqref{eq:w_z} and $\widetilde w_t^{(i)}$ according to \eqref{eq:weightsas} for $i=1,\ldots,N$
}
Draw $k$ with $\mathbb{P}(k=i)\propto w_T^{(i)}$
\;
\textbf{return} $\bm{z^{(k)}}_{1:T}$
\end{algorithm}

\begin{algorithm}[t]
\DontPrintSemicolon
\caption{PG-AS}
\label{alg:PG_AS}
\KwIn{Reference trajectory $\bm{x}^*_{1:T}$, reference parameters $\theta^*$}
\KwOut{New sample $(\bm x_{1:T},\theta)$ leaving $p(\bm x_{1:T},\theta|\bm y_{1:T})$ as invariant density}

\tcc{State update: CSMC-AS}
Sample $\bm x_{1:T}$ through CSMC-AS (Algorithm \ref{alg:CSMC_AS}) with $z_t=x_t,M_t(\bm z_{1:t-1},z_t)=p(x_t|x_{t-1},\theta^*)$ and $G_t(z_t)=p(y_t|x_t)$ 
 \\
\tcc{Parameter update}
Sample $\theta \sim p(\theta \mid \bm{x}_{1:T}, \bm{y}_{1:T})$
\;
\BlankLine
\textbf{return} $(\bm{x}_{1:T}, \theta)$
\end{algorithm}

\section{Methods}\label{sec-meth}

In this section, we propose a family of CSMC algorithms for simultaneous updates of the latent process $\bm x_{1:T}$ and the parameter $\theta$. 
Our main target is the joint distribution
\begin{align}
p(\bm x_{1:T},\theta|\bm y_{1:T})\propto &p(\theta)
\left[
\prod_{s=1}^Tp(x_s|x_{s-1},\theta)p(y_s|x_s)\right]
\label{eq:target_smc_nonMarkovian}
\end{align}
where $p(x_1|\theta,x_0)\equiv p(x_1)$.

Based on the role of the parameter as an auxiliary variable of the dynamic model, we introduce an alternative formulation assuming an extended distribution where $\bm z=\{\bm x_{1:T},\bm \theta_{1:T}\}$:
\begin{align}
\bar p(\bm x_{1:T},\bm \theta_{1:T}|\bm y_{1:T})
=&p(\bm x_{1:T},\theta_T|\bm y_{1:T})h_T(\bm \theta_{1:T-1}|\theta_T,\bm x_{1:T})\notag\\
\propto&p(\bm x_{1:T})p(\bm y_{1:T}|\bm x_{1:T})p(\theta_T|\bm x_{1:T})h_T(\bm \theta_{1:T-1}|\theta_T,\bm x_{1:T})
\label{eq:target_extended}
\end{align}
where $h_T$ is an arbitrary distribution.
Setting $\theta_T=\theta$ implies that \eqref{eq:target_extended} marginalizes to \eqref{eq:target_smc_nonMarkovian}; thus, by targeting $\bar p(\bm x_{1:T},\bm \theta_{1:T}|\bm y_{1:T})$, we can draw samples from $p(\bm x_{1:T},\theta|\bm y_{1:T})$. 
A particular interesting case is when $h_T(\bm\theta_{1:T-1}|\theta_T,\bm x_{1:T})=\prod_{s=1}^{T-1}p(\theta_s|\bm x_{1:s})$. In that case, 
\begin{align}
\bar p(\bm x_{1:T},\bm \theta_{1:T}|\bm y_{1:T})
\propto&\prod_{s=1}^Tp(x_{s}|\bm x_{<s})p(\theta_s|\bm x_{1:s})p(y_s|x_s)\notag\\
=&\prod_{s=1}^Tp(\theta_s|\bm x_{<s})p(x_{s}|\theta_s,x_{s-1})p(y_s|x_s)\label{eq:p.bar.rewritten},
\end{align}
where $\bm x_{<s-1} := (x_1, \dots, x_{s-1}) \text{ for } s \ge 2, \quad \bm x_{<1} := \varnothing$;
showing that the model can be written as a dynamic process with time-varying parameters. However, other choices of $h_T$ might also be of interest.

For construction of SMC algorithms, consider the sequence of unnormalized densities:
\begin{equation}
\gamma_t(\bm x_{1:t},\bm\theta_{1:t})
=
p(\theta_t)\left[\prod_{s=1}^tp(x_s|x_{s-1};\theta_t)p(y_s|x_s)\right]h_t(\bm\theta_{1:t-1}|\theta  _t,\bm x_{1:t})
\label{eq:pgas_prop}
\end{equation}
with $p(x_1|x_0,\theta_t)=p(x_1)$ and $h_1(\cdot)\equiv1$,
where $\{h_t\}$ is now a sequence of distributions with  $h_T(\cdot)$ from~\eqref{eq:target_extended}.  Similar to~\eqref{eq:p.bar.rewritten}, we may rewrite~\eqref{eq:pgas_prop} to
\begin{align}
\gamma_t(\bm x_{1:t},\bm \theta_{1:t})
\propto&\left[\prod_{s=1}^tp(\theta_s|\bm x_{<s})p(x_{s}|\theta_s,x_{s-1})p(y_s|x_s)\right]\cdot \frac{h_t(\bm\theta_{1:t-1}|\theta  _t,\bm x_{1:t})}{\prod_{s=1}^{t-1}p(\theta_s|\bm x_{1:s})},\label{eq:gamma.t.alt}
\end{align}
making it possible to utilize a formulation of time-varying parameters for any choice of auxiliary distributions $\{h_t(\cdot)\}$.

\subsection{Conditional SMC and ancestor sampling in the extended space}

Consider now a (conditional) SMC procedure where a proposal distribution $q_t(\theta_t,x_t|\cdot)$ is applied on both the parameter $\theta_t$ and the latent variable $x_t$. Defining $z_t=(x_t,\theta_t)$, 
the importance weights involved will then be
\begin{align}
w_t
=&
\frac{\gamma_t(\bm x_{1:t},\bm \theta_{1:t})}
     {\gamma_{t-1}(\bm x_{1:t-1},\bm \theta_{1:t-1})q_t(\theta_t,x_t|\cdot)}\notag \\
=&\frac{p(\theta_t)\left[\prod_{s=1}^tp(x_s|x_{s-1};\theta_t)\right]p(y_t|x_t)h_t(\bm\theta_{1:t-1}|\theta_t,\bm x_{1:t})
}{p(\theta_{t-1})\left[\prod_{s=1}^{t-1}p(x_s|x_{s-1};\theta_{t-1})\right]h_{t-1}(\bm\theta_{1:t-2}|\theta_{t-1},\bm x_{1:t-1})
q_t(\theta_t,x_t|\cdot)}.
\label{eq:feyn_general}
\end{align}
If we choose $h_t$ so that it factorizes in time for $t>1$:
\begin{align}
h_t(\bm\theta_{1:t-1}|\theta_t,\bm x_{1:t})=h_{t-1}(\bm\theta_{1:t-2}|\theta_{t-1},\bm x_{1:t-1}) \hbar_t( \theta_{t-1}|\theta_{t},\bm x_{1:t}),
\label{eq:h_factorization}
\end{align}
then \eqref{eq:feyn_general} simplifies further to:
\begin{equation}\label{eq:forward_weights}
\begin{split}
  w_t
=&\frac{p(\theta_t)\left[\prod_{s=1}^tp(x_s|x_{s-1};\theta_t)\right]p(y_t|x_t)
\hbar_t(\theta_{t-1}|\theta_t,\bm x_{1:t})}{p(\theta_{t-1})\left[\prod_{s=1}^{t-1}p(x_s|x_{s-1};\theta_{t-1})\right]
q_t(\theta_t,x_t|\cdot)}
\\
=&\frac{p(\theta_t|\bm x_{1:t-1})p(x_t|\theta_t,x_{t-1})p(y_t|x_t)\hbar_t(\theta_{t-1}|\theta_t,\bm x_{1:t})}
     {p(\theta_{t-1}|\bm x_{1:t-1})q_t(\theta_t,x_t|\cdot)}
\end{split}
\end{equation}
where the second expression is based on the alternative formulation~\eqref{eq:gamma.t.alt}.

While $h_t$ (or $\hbar_t$) may be chosen freely provided the dependency structure in \eqref{eq:target_extended} is respected, it can be viewed as a mechanism to tune the overall correlation structure along the time dimension.
For the specific choice of $\hbar_t=p(\theta_{t-1}|\bm x_{1:t-1})$, we get:
\begin{align*}
  w_t
=&\frac{p(\theta_t|\bm x_{1:t-1})p(x_t|x_{t-1},\theta_t)p(y_t|x_t)}{q_t(\theta_t,x_t|\cdot)}
\end{align*}
which simplifies further to $w_t=p(y_t|x_t)$ with the choice $q_t(\theta_t,x_t|\cdot)=p(\theta_t|\bm x_{1:t-1})p(x_t|x_{t-1},\theta_t)$.



In the extended space defined in \eqref{eq:target_extended}, assuming the unnormalized target \eqref{eq:pgas_prop}, the ancestor sampling (AS) weights correspond to:
\begin{align}
\label{eq:AS_weight_general}
\widetilde{w}_{t-1,T}^{(i)}&=w_{t-1}^{(i)}\frac{\gamma_{T}(\langle\bm x_{1:t-1}^{(i)},\bm x_{t:T}^*\rangle,\langle\bm \theta_{1:t-1}^{(i)},\bm \theta_{t:T}^*\rangle)}{\gamma_{t-1}(\bm x_{1:t-1}^{(i)},\bm \theta_{1:t-1}^{(i)})}\\
&=w_{t-1}^{(i)}\frac{p(\theta^*_{T})p(\langle x_{1:t-1}^{(i)},x^*_{t:T}\rangle|\theta_T^*)h_T(\langle\bm\theta^{(i)}_{1:t-1},\theta^*_{t:T-1}\rangle|\theta_T^* ,\langle\bm x^{(i)}_{1:t-1},x^*_{t:T}\rangle)\prod_{s=t}^{T} \left[ p(y_s|x_s^*)\right]}{p(\theta^{(i)}_{t-1})\left[\prod_{s=1}^{t-1}p(x^{(i)}_s|x^{(i)}_{s-1};\bm\theta^{(i)}_{t-1})\right]h_{t-1}(\bm\theta^{(i)}_{1:t-2}|\theta^{(i)}_{t-1},\bm x^{(i)}_{1:t-1})}.\notag
\end{align}
Assuming factorization of $h_t$ as in \eqref{eq:h_factorization}, we obtain:
\begin{equation}\label{eq:AS_weight}
\begin{split}
    \widetilde{w}_{t-1,T}^{(i)}\propto& w_{t-1}^{(i)} \frac{p(\theta_T^*|\bm x_{1:t-1}^{(i)})}{p(\theta_{t-1}^{(i)}|\bm x_{1:t-1}^{(i)})}p(x_t^*|\theta_T^*,\bm x_{t-1}^{(i)})\hbar_t(\theta_{t-1}^{(i)}|\theta_t^*,\langle\bm x_{1:t-1}^{(i)},x_t^*\rangle)\times\\
    &\prod_{s=t+1}^T\hbar_s(\theta_{s-1}^*|\theta_s^*,\bm x_{1:t-1}^{(i)},\bm x_{t:s}^*)
\end{split}
\end{equation}
Setting $\hbar_t=p(\theta_{t-1}|\bm x_{1:t-1})$ as before, gives:
\begin{align}
    \widetilde{w}_{t-1,T}^{(i)}&\propto w_{t-1}^{(i)} p(\theta_T^*|\bm x_{1:t-1}^{(i)})p(x_t^*|\theta_T^*, x_{t-1}^{(i)})\prod_{s=t}^{T-1}p(\theta_{s}^*|\langle\bm x_{1:t-1}^{(i)},\bm x_{t:s}^*\rangle)
    \label{eq:asweight_costly}.
\end{align}
As in the general non-Markovian case, calculating the product term $\prod_{s=t}^Tp(\theta_{s}^*|\langle\bm x_{1:t-1}^{(i)},\bm x_{t:s}^*\rangle)$ can be costly. 

In models where suitable sufficient statistics are available, the computational cost of estimating the full weights can be dramatically reduced.
It is also possible to exploit the flexibility of the setting by utilizing $h_t$ to impose Markov structure across the extended parameter space $\bm \theta_{1:T}$ leading to much simpler AS weights. 
Let for example $\hbar_t(\theta_{t-1}|\theta_t,x_{1:t})= f(\theta_{t-1}|\theta_t)$ for some $f$ so that $f(\theta_t|\theta_{t-1})=f(\theta_{t-1}|\theta_{t})$ (e.g. $f(\theta_{t-1}\mid\theta_t)
= \mathcal N\!\left(\theta_{t-1};\, \theta_t, \tilde\sigma^2\right)$ given some arbitrary variance $\tilde\sigma^2$).
For this specific choice the forward resampling weights \eqref{eq:forward_weights} become:
\begin{align*}
    w_t&=\frac{p(\theta_t) \left[\prod_{s=1}^t p(x_s|x_{s-1},\theta_t) \right]f(\theta_{t-1}|\theta_t)p(y_t|x_t)}{p(\theta_{t-1}) \left[\prod_{s=1}^{t-1} p(x_s|x_{s-1},\theta_{t-1}) \right]q(\cdot)}\\
    &=\frac{p(\theta_t|\bm x_{1:t-1})p(x_t|x_{t-1},\theta_t) f(\theta_{t-1}|\theta_t)p(y_t|x_t)}{p(\theta_{t-1}|\bm x_{1:t-1})q(\cdot)},
\end{align*}
which simplifies noticeably to 
\begin{align}
\label{eq:wfor_rw}
    w_t=\frac{p(\theta_t|\bm x_{1:t-1})}{p(\theta_{t-1}|\bm x_{1:t-1})}p(y_t|x_t)
\end{align}
when the proposal corresponds to $f(\theta_t|\theta_{t-1})p(x_t|x_{t-1},\theta_t)$.

More interestingly, the AS weights \eqref{eq:AS_weight} for this choice of $\hbar_t$ become:
\begin{align}
\label{eq:was_rw}
    \widetilde w_{t-1}^{(i)}\propto w_{t-1}^{(i)} \frac{p(\theta_T^*|\bm x_{1:t-1}^{(i)})p(x_t^*|\theta_T^*,x_{t-1}^{(i)})}{p(\theta_{t-1}^{(i)}|\bm x_{1:t-1}^{(i)})}f(\theta_{t-1}^{(i)}|\theta^*_t)
\end{align}
for which the computational complexity is then reduced to $\mathcal{O}(N)$.

\begin{algorithm}[t]
\DontPrintSemicolon
\caption{p-CSMC-AS}
\label{alg:p-CSMC}
\KwIn{Reference trajectory $\{\bm{x}^*_{1:T},\bm\theta_{1:T}^*\}$}
\KwOut{A new sample $\bm{x}_{1:T},\bm\theta_{1:T}$ distributed as  $\bar p(\bm{x}_{1:T},\bm\theta_{1:T}|\bm y_{1:T})=p(\bm x_{1:T},\theta_T|\bm y_{1:T})h_T(\bm\theta_{1:T-1}|\theta_T,\bm x_{1:T})$ }
Draw $x_{1}^{(i)},\theta_1^{(i)}\sim q_1(\cdot)$ for $i=1,\dots,N-1$ \\
Set $x_{1}^{(N)},\theta_1^{(N)}=x^*_{1},\theta_1^*$ \\
Set $w_{1}^{(i)}=\frac{p(x^{(i)}_1)p(\theta_1^{(i)}|x_1^{(i)})p(y_{1}|x^{(i)}_{1})}{q_1(x^{(i)}_{1},\theta_1^{(i)})}$ for $i=1,\ldots,N$ \\
\For{$t=2$ \KwTo $T$}{
Draw $a_t^{i}$ with $\mathbb{P}(a^{i}_t = k) \propto w^{(k)}_{t-1}, \quad \text{for } i = 1, 2, \ldots, N-1$ \\
Draw $(x_t^{(i)},\theta_t^{(i)})\sim q_t(x_t^{(i)},\theta^{(i)}_t|\cdot) ,\quad \text{for } i = 1, 2, \ldots, N-1$ \\
Set $(x_t^{(N)},\theta_t^{(N)})=(x_t^*,\theta_t^*)$\\
Compute $\widetilde w_{t-1|T}^{(i)}=w_{t-1}^{(i)}\frac{\gamma_T(\bm x_{1:t-1}^{(i)},\bm x_{t:T}^*,\bm\theta_{1:t-1}^{(i)},\bm\theta_{t:T}^*)}{\gamma_{t-1}(\bm x_{1:t-1}^i,\bm\theta_{1:t-1}^{(i)})}$ for $i=1,...,N$ as in \eqref{eq:AS_weight}\\
Draw $a_t^{N}$ with $P(a_t^{N}=k)\propto \widetilde w^{(i)}_{t-1|T}$ \\
Set $\bm x_{1:t}^{(i)}=(x_{1:t-1}^{(a_t^{i})},x_t^{(i)})$ and $\bm\theta_{1:t}^{(i)}=(\bm\theta_{1:t-1}^{(a_t^{i})},\theta_t^{(i)})$ for $i=1,...,N$\\
Set $w_t^{(i)}=\frac{\gamma_t(\bm x_{1:t}^{(i)},\bm\theta_{1:t}^{(i)})}{\gamma_{t-1}(\bm x_{1:t-1}^{(i)},\bm\theta_{1:t-1}^{(i)})q_t(x_t^{(i)},\theta_t^{(i)})}$
 as in \eqref{eq:forward_weights}}
Draw $k$ with $\mathbb{P}(k=i)\propto w_T^{(i)}$ \\
\textbf{return} $(\bm{x}^{(k)}_{1:T},\bm\theta_{1:T}^{(k)})$
\end{algorithm}

Algorithm~\ref{alg:p-CSMC} describes the p-CSMC algorithm with ancestor sampling 

\begin{proposition}
Algorithm \ref{alg:p-CSMC} leaves the target distribution $\bar p(\bm{x}_{1:T},\bm\theta_{1:T}|\bm y_{1:T})$ invariant (i.e. it defines a Markov transition kernel that preserves $\bar p(\bm{x}_{1:T},\bm\theta_{1:T}|\bm y_{1:T})$ as its stationary distribution.
\end{proposition}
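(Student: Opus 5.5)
The plan is to reduce the claim to the standard invariance result for CSMC with ancestor sampling \citep{andrieu2010particle,lindsten2014particle}, applied to a generic Feynman--Kac model on the extended state $z_t=(x_t,\theta_t)$. I will first check that Algorithm~\ref{alg:p-CSMC} is exactly Algorithm~\ref{alg:CSMC_AS} for a suitable sequence of unnormalised targets. Take $\gamma_t$ from \eqref{eq:pgas_prop}, with $M_t G_t$ given by the ratio $\gamma_t/\gamma_{t-1}$. This ratio is a well-defined positive function of $\bm z_{1:t}$ wherever $\gamma_{t-1}>0$, which requires non-Markovian kernels, but Algorithm~\ref{alg:CSMC_AS} already allows those. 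The proposal is $q_t(z_t\mid\bm z_{1:t-1})$. The forward weights in Algorithm~\ref{alg:p-CSMC} are then exactly \eqref{eq:w_z}. The ancestor weights are exactly \eqref{eq:weightsas} in the form \eqref{eq:AS_weight_general}. The simplified forms \eqref{eq:forward_weights} and \eqref{eq:AS_weight} differ from these only by factors that do not depend on the particle index $i$, so they induce the same resampling laws. I also need to check that the terminal target $\gamma_T$ normalises to $\bar p(\bm x_{1:T},\bm\theta_{1:T}\mid\bm y_{1:T})$ in \eqref{eq:target_extended}. This follows because $p(\theta_T)\prod_s p(x_s\mid x_{s-1},\theta_T)p(y_s\mid x_s)\propto p(\bm x_{1:T},\theta_T\mid\bm y_{1:T})$.

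Next I will prove invariance directly, using the extended-target (unbiased-likelihood) construction. Define the joint law $\Phi$ on all generated particles, all ancestor indices, and the output index $k$ as follows. First pick $k$ and the whole ancestral lineage $(b_1,\dots,b_T)$ of the output path. Draw $\bm z^{(k)}$ from $\bar p$. Draw all other particles from the conditional law they have inside CSMC-AS. The key identity to verify is
\[
\Phi = \bar p(\bm z_{1:T}^{(k)})\,\cdot\, \mathcal K_{\text{CSMC-AS}}(\text{all particles}, k\mid \bm z^*=\bm z_{1:T}^{(k)})\cdot \frac{1}{N^T}
\]
up to relabelling, and that its marginal in the reference-path coordinates is $\bar p$. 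Equivalently, following \citet{lindsten2014particle}, I will show that the joint density of (reference path, particle system, output index) has the same marginal for the output path as for the input path. The standard route is the ratio computation. Write the density of the non-reference particles and ancestors as $\prod_t\prod_{i\neq N} W_{t-1}^{(a_t^i)}q_t(z_t^{(i)}\mid\cdot)$. Multiply by $\bar p(\bm z^*)$, by the ancestor-sampling probability of $a_t^N$, and by the final selection probability $W_T^{(k)}$. Then telescope the $\gamma_t/\gamma_{t-1}$ factors along the output lineage to get $\gamma_T(\bm z^{(k)})/Z$ times a term symmetric in which lineage played the reference role.

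The main obstacle is this telescoping step with ancestor sampling. There the reference's past $\bm z_{1:t-1}$ is replaced by that of particle $a_t^N$. I will use the fact that $\widetilde w_{t-1,T}^{(i)}\propto w_{t-1}^{(i)}\gamma_T(\langle\bm z_{1:t-1}^{(i)},\bm z^*_{t:T}\rangle)/\gamma_{t-1}(\bm z_{1:t-1}^{(i)})$. This makes the ancestor step a Gibbs move on $a_t^N$ under the extended target, as in \citet{lindsten2014particle}, Theorem~1. That argument only uses the ratio form of the weights and never assumes a Markov structure, so it carries over verbatim once the identification of the first paragraph is made. I will also need $h_t$ to be a proper conditional density for each $t$. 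This guarantees that $\gamma_{t}$ integrates the $\theta$-history consistently, so that intermediate normalising constants are finite. Positivity of $q_t$ wherever $\gamma_t>0$ is required so that the weights are well defined.

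Finally, invariance of $\bar p$ for the full chain follows because the kernel is the composition of these Gibbs-type moves on the extended target. Marginalising onto $\theta_T$ recovers the invariance of $p(\bm x_{1:T},\theta\mid\bm y_{1:T})$.
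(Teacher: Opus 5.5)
Your proposal is correct and is essentially the paper's argument. You embed the parameter into the state as $z_t=(x_t,\theta_t)$, recognise Algorithm~\ref{alg:p-CSMC} as generic CSMC-AS for the targets $\gamma_t$ in \eqref{eq:pgas_prop} with $\gamma_T\propto\bar p$, and rest invariance on the extended-target, partially collapsed Gibbs argument of \citet{lindsten2014particle}, which uses only the ratio form of the weights and no Markov structure. Your explicit checks that \eqref{eq:forward_weights} and \eqref{eq:AS_weight} match the generic ratio weights up to $i$-independent factors, and your properness and support conditions, only make explicit what the paper's short proof leaves implicit.
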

\begin{proof}
Following \citet{andrieu2010particle}, we denote by $\phi$ the target distribution in the augmented space including all the random variables generated by the algorithm, which in this case can be defined as:
\begin{align}
\phi(\bm x_{1:T},\bm \theta_{1:T},\bm a_{2:T},k)&=\phi(\bm x^{b_{1:T}}_{1:T},\bm \theta^{b_{1:T}}_{1:T},b_{1:T}) \quad \phi(\bm x_{1:T}^{-b_{1:T}},\bm \theta_{1:T}^{-b_{1:T}},\bm a_{2:T}^{-b_{1:T}}|\bm x^{b_{1:T}}_{1:T},\bm \theta^{b_{1:T}}_{1:T}) \label{eq:augm_space} \\
&=\frac{\bar p(\bm x^{b_{1:T}}_{1:T},\bm \theta^{b_{1:T}}_{1:T}|\bm y_{1:T})}{N^T} \quad \phi(\bm x_{1:T}^{-b_{1:T}},\bm \theta_{1:T}^{-b_{1:T}},\bm a_{2:T}^{-b_{1:T}}|\bm x^{b_{1:T}}_{1:T},\bm \theta^{b_{1:T}}_{1:T})  \notag,
\end{align}.

The first term corresponds to the marginal distribution of interest, and the second is the conditional distribution from which the algorithm samples from to generate all the new proposals. By embedding the auxiliary parameters into some extended state space $\bm z_{1:t}=\{\bm x_{1:T},\bm \theta_{1:T}\}$, it then follows directly from the proof described in \citet{lindsten2014particle} that Algorithm \ref{alg:p-CSMC} leaves the target distribution invariant.
\end{proof}

\subsection{The partially collapsed p-CSMC-AS algorithm}

In the bootstrap implementation of the p-CSMC-AS algorithm, where the auxiliary parameters are sampled directly from the model, the forward weights do not depend on $\bm \theta_{1:T}^*$. This suggests the possibility of implementing a partially collapsed version of Algorithm~\ref{alg:p-CSMC}, in which the auxiliary parameters are resampled at each iteration but are excluded from the marginal distribution of interest.

This basically means that the intermediate targets, when $t<T$, are $p(\bm x_{1:t}|\bm y_{1:t})$, whereas the final target at time $T$ is the full joint $p(\bm x_{1:T},\theta_T|\bm y_{1:T})$.

While this subtle distinction makes no difference in terms of the forward weights, given that our effective proposal is the marginal $p( x_{t}|\bm x_{1:t-1})$, it simplifies the ancestor sampling weights noticeably.

More formally, consider now the (unnormalized) target distributions:
\begin{align*}
\gamma_t(\bm x_{1:t})=&p(\bm x_{1:t})p(\bm y_{1:t}|\bm x_{1:t}),&&t<T\\
\gamma_T(\bm x_{1:T},\theta_T)=&p(\bm x_{1:T},\theta_T)p(\bm y_{1:T}|\bm x_{1:T}),\\
\intertext{with the specific proposals}
q_t(x_t|\bm x_{1:t-1})=&p(x_t|\bm x_{1:t-1}),&&t<T\\
q_T(x_T,\theta_T|\bm x_{1:T-1})=&p(x_T|\bm x_{1:T-1})p(\theta_T|\bm x_{1:T}).
\end{align*}
Then the forward weights are
\begin{align}
w_t^{(i)}=&\frac{\gamma_t(\bm x_{1:t}^{(i)})}{\gamma_{t-1}(\bm x_{1:t-1}^{(i)})q_t(x_t|\bm x_{1:t-1})} =p(y_t|x_t)\notag\\
\intertext{while the ancestor weights become}
 \widetilde{w}_{t-1}^{(i)}=&w_{t-1}^{(i)}\frac{\gamma_T(\bm x_{1:t-1}^{(i)},\bm x_{t:T}^*,\theta_T^*)}{\gamma_{t-1}(\bm x_{1:t-1}^{(i)})}\notag\\
 =&w_{t-1}^{(i)}\frac{p(\bm x_{1:t-1}^{(i)},\bm x_{t:T}^*,\theta_T^*)p(\bm y_{1:T}|\bm x_{1:t-1}^{(i)},\bm x_{t:T}^*)}{p(\bm x_{1:t-1}^{(i)})p(\bm y_{1:t-1}|\bm x_{1:t-1}^{(i)})}\notag\\
 \propto&w_{t-1}^{(i)}p(\theta_T^*|\bm x_{1:t-1}^{(i)})p(x_t^*|x_{t-1}^{(i)},\theta_T^*).\label{eq:collapsed_weight}
\end{align}

\begin{algorithm}[h]
\DontPrintSemicolon
\caption{col-p-CSMC-AS}
\label{alg:col-p-CSMC}
\KwIn{Reference trajectory $\{\bm{x}^*_{1:T},\theta_{T}^*\}$}
\KwOut{A new sample $\bm{x}_{1:T},\theta_{T}$ distributed as  $p(\bm{x}_{1:T},\theta_{T}|\bm y_{1:T}) \equiv p(\bm x_{1:T},\theta|\bm y_{1:T})$ }
Draw  $x_1^{(i)}\sim p(x_1)$  for $i=1,\dots,N-1$ \\
Set $x_{1}^N=x^*_{1}$ \\
Set $w_{1}^{(i)}=p(y_{1}|x^{(i)}_{1})$ for $i=1,\ldots,N$ \\
\For{$t=2$ \KwTo $T$}{
Draw $a_t^{i}$ with $\mathbb{P}(a^{i}_t = k) \propto w^{(k)}_{t-1}, \quad \text{for } i = 1, 2, \ldots, N-1$ \\
Draw  $x_t^{(i)}\sim p(x_t|\bm x_{1:t-1}^{(a_t^{i})})$  \quad \text{for } $i = 1, 2, \ldots, N-1$ \\
Set $x_t^{(N)}=x_t^*$\\
Compute $\widetilde w_{t-1|T}^{(i)}=w_{t-1}^{(i)}\frac{p(\bm x_{1:t-1}^{(i)},\bm x_{t:T}^*,\theta_{T}^*,\bm y_{1:T})}{p(\bm x_{1:t-1}^{(i)},\bm y_{1:t-1})}$ for $i=1,...,N$ as in \eqref{eq:collapsed_weight}\\
Draw $a_t^{N}$ with $P(a_t^{N}=k)\propto \widetilde w^{(i)}_{t-1|T}$ \\
Set $\bm x_{1:t}^{(i)}=(x_{1:t-1}^{(a_t^{i})},x_t^{(i)})$ for $i=1,...,N$\\
Set $w_t^{(i)}=p(y_t|x^{(i)}_t)$}
Draw $k$ with $\mathbb{P}(k=i)\propto w_T^{(i)}$ \\
Draw $\theta_T \sim p(\cdot |\bm x_{1:T}^{(k)})$ \\
\textbf{return} $(\bm{x}^{(k)}_{1:T},\theta_T)$
\end{algorithm}

The core structure of this method, which we refer to as the partially collapsed pCSMC-AS (col-pCSMC-AS) is summarized in Algorithm \ref{alg:col-p-CSMC}.

Concerning step 6: In models where $p(x_t|x_{1:t-1})$ is not directly available, simulation can in practice be performed through the two step procedure
\begin{enumerate}
    \item $\theta_t^{(i)}\sim p(\theta_t|\bm x_{<t}^{(a_t^{i})})$;
    \item $x_t^{(i)}\sim p(x_t|x_{t-1}^{(a_t^{i})},\theta_t^{(i)})$.
\end{enumerate}

Note that, although it is obviously possible to go full marginal by marginalizing out $\theta_T$ as well \citep{wigren2019parameter}, retaining it preserves a conditional Markov structure over the latent space across the MCMC dimension, yielding much simpler ancestor sampling weights. Further, note that even if we require the possibility to simulate from $p(x_t|\bm x_{1:t-1})$, we do not need to compute the corresponding densities, only those conditional on $\theta_T^*$. 

 Although this algorithm may closely resemble a pGibbs strategy where only a single parameter sample is drawn at the end of each CSMC iteration, in the partially collapsed pCSMC-AS algorithm this parameter is not used in the proposal mechanism, which substantially reduces the correlation between iterations. 

 The methods defined in Algorithms \ref{alg:p-CSMC} and \ref{alg:col-p-CSMC} provide highly flexible frameworks in which unknown static parameters can be embedded into the state space to enable joint inference on states and parameters.  As shown on section \ref{sec-verify}, this can be specially convenient in complex problems by potentially grouping variables together in multiple ways.

\section{Experiments and applications}\label{sec-verify}

\subsection{The model}
A widely used framework for modeling infectious disease incidences in discrete time is based on stochastic branching processes, also referred to as renewal equation models. Let $x_t$ denote the number of newly infected individuals at time $t$ which is latent (unobserved). The expected number of new infections at time $t$ is then given by the product of a time-varying reproduction number $R_t$ and a weighted sum of previous incidence: 
\begin{equation}
x_t \sim \text{Poisson}\!\left( R_t \sum_{m=1}^{p} \theta_m x_{t-m} \right),
\label{eq:branching}
\end{equation}
for some initial seeding over the interval $t=-p+1$ and $t=0$ so that $\sum_{t=-p+1}^0x_t>0$.

We assume a dynamic model on $R_t$ which we define as a random walk on the log scale:
\[ \log(R_t) \sim \mathcal{N}(\log(R_{t-1}),\sigma_R^2)
\] where $\log(R_1)\sim \mathcal{N}(0,\sigma_R^2)$ for some known variance $\sigma_R^2$.

We additionally implement a binomial observational model on top that relates the latent space to actual observations $\bm y_{1:T}$ that in this specific case correspond to hospitalization incidences:
\begin{equation}\label{eq:hosp}
y_t \sim \mathrm{Binomial}(x_{t},p_{hosp})
\end{equation}
for some probability $p_{hosp}$.

Model~\eqref{eq:branching} admits a natural interpretation as a Poisson branching process where each individual infected at time $t-m$ independently generates secondary infections at time $t$ according to a Poisson distribution with mean $R_t \theta_m$. Summing over all infectious individuals yields the aggregate incidence process.

The parameters $\theta_1,\ldots,\theta_p$ represent the infectiousness profile over time since infection and satisfy $\theta_m \ge 0$. They are typically derived from the generation interval or serial interval distribution and quantify the relative contribution of past cases at lag $m$ to new infections at time $t$. The upper limit $p$ denotes the maximum infectious period considered, beyond which contributions to transmission are assumed negligible. 
These parameters are typically normalized so that $\sum_{m=1}^{p} \theta_m = 1$, ensuring that they describe only the relative infectivity profile, while the overall scale of transmission is captured entirely by $R_t$.

$R_t$ denotes the effective reproduction number at time $t$, defined as the average number of secondary infections generated by a typical infectious individual under prevailing epidemiological conditions. Allowing $R_t$ to vary over time enables the model to capture changes in transmission due to for example behavioral changes, non-pharmaceutical interventions or seasonal effects. 

Estimation of $R_t$ from incidence data using renewal-type models has become standard practice in real-time epidemic monitoring \citep{WallingaTeunis2004,Cori2013,Thompson2019}. Its statistical simplicity has facilitated its widespread use on multiple types of surveillance datasets. Extensions of the basic model incorporate reporting delays, overdispersion via negative binomial, and change-point structures in $R_t$ to improve robustness in low-incidence or rapidly evolving epidemic settings \citep{Thompson2019,Parag2021}.

The model formulation~\eqref{eq:branching} does not directly allow for simple updates of the parameters given the latent process $\bm x_{1:T}$. We therefore consider a reformulation of the model where, for lags $m=1,\dots,p$, we introduce the per-lag counts or contributions $x_{t,m}$ such that the total incidence at time $t$ is $x_t=\sum_{m=1}^p x_{t,m}$ with
\begin{align*}
x_{t,m}\sim\text{Poisson}(R_t\theta_m\sum_{n=1}^px_{t-m,n}).
\end{align*}
We further assume independent Gamma priors on the parameters:
\[
\theta_m \sim \mathrm{Gamma}(\alpha_m,\beta_m).
\]
We then get that the conditional distribution $p(\theta_m|\bm R_{1:t},\bm x_{1:t,1:p})$ is available in closed form (see Section \ref{supl:subsec:gamma-pois} in the Supplementary Material (SM)) and corresponds to conditionally independent Gamma distributions:
\[
\theta_m | \bm x_{1:t,1:p}, \bm R_{1:t}, \bm y_{1:t}
\,\sim\, \mathrm{Gamma}\!\big(\alpha_m + C_{t,m},\; \beta_m + E_{t,m}\big),
\qquad m=1,\dots,p.
\]
Here the sufficient statistics per-lag can be iteratively updated as:
\[
C_{t,m} := \sum_{s=1}^t x_{s,m},
\qquad  
E_{t,m} := \sum_{s=1}^t \big(R_s\, x_{s-m}\big).
\]

Here, we present results from two different experiments within this framework: 
\begin{itemize} \item A simulation study based on synthetic data, where we assume $\bm R_{1:T}$ to be known and focus on inferring the infectivity profile $\bm\theta_{1:p}$. We compare the performance of PGAS, pCSMC-AS, col-pCSMC-AS, and a full marginal sampler in two scenarios of different complexity, with $p=2$ and $p=7$. 
\item An application in which we estimate both the infectivity profile and the reproductive number based on real data (hospitalizations) from the Alpha variant outbreak of SARS‑CoV‑2 in Norway. In this case, we compare only the performance of col-pCSMC-AS against a slice sampler (JAGS \citep{hornik2003jags}), since PGAS failed to converge within a reasonable time frame. \end{itemize}

\subsection{Simulation study}

In this first set of experiments with synthetic data, we assume $R_t$ to be known and focus on recovering the weights $\bm\theta_{1:p}$. Note that this makes the problem identifiable without additional constraints in terms of scale. 

Synthetic data was generated following model~\eqref{eq:branching}-\eqref{eq:hosp}.
We first predefined the number of dimensions $p$ and the time period $T$. Then, the parameters $\theta_m$ were sampled randomly from independent Gamma distributions $\text{Gamma}(\alpha_m,\beta_m)$, where $\alpha_m=2$ and $\beta_m=4$, and then normalized. $R_t$ was generated through a random walk on the log scale so that the resultant outbreak was sufficiently relevant (i.e. with some relevant incidence during the study period). And finally, we produced some synthetic observations through the observational model where we have used $p_{hosp}=0.1$. We set $x_{1}=10$ initially to seed the model. We assume $x_t=0$ for $t\leq0$

Two different settings, $p=2,T=50$ and $p=7,T=100$, were considered (see Figure \ref{supl:fig:syn_data} in the SM for the realised observation processes).
We compared 4 different algorithms: PGAS, pCSMC-AS, col-pCSMC-AS and the full marginal sampler. We have used bootstrap proposals in all the algorithms. 

In PGAS, we alternatingly sampled the latent space $p(\bm x_{1:T,1:p}|\theta_m,\bm R_{1:T},\bm y_{1:T})$ through a CSMC algorithm and the parameters externally through the conditionals $p(\theta_m|\bm x_{1:T,1:p},\bm R_{1:T})$. In the pCSMC-AS, we have set $\hbar_t=p(\theta_{t-1}|\bm x_{1:t-1})$ and $q_t(\theta_t,x_t|\cdot)=p(\theta_t|\bm x_{1:t-1})p(x_t|x_{t-1},\theta_t)$. 
In this specific model, the marginal transition distribution $p(x_{t,m}|\bm x_{1:t-1,m})$ is available in closed form (see section \ref{supl:sec:full_marginal} in the SM). This makes the marginal sampler~\citep{wigren2019parameter}, targeting the marginal distribution $p(\bm x_{1:T,1:p}|\bm R_{1:T},\bm y_{1:T})$ directly, possible to apply in this case. Due to the non-Markovian structure, the computational burden for this sampler is much higher when combined with ancestor sampling.

\subsubsection{Results}

For all the experiments we have assumed a Gamma prior for $\theta_m$ with hyperparameters $\alpha_m=2$ and $\beta_m=4$ (the same was used for generating the data). 
For all the algorithms, $N=300$ particles were used. 10.000 MCMC iterations were run, from which 500 represented the burn-in period (except for PGAS for which 40.000 MCMC iterations were run where the first 4.000 were discarded as burn-in). Further details on the test runs are given in section \ref{supl:sec:tests summaries} in the appendix.

\begin{figure}[H]
  \centering
    \includegraphics[width=\linewidth]{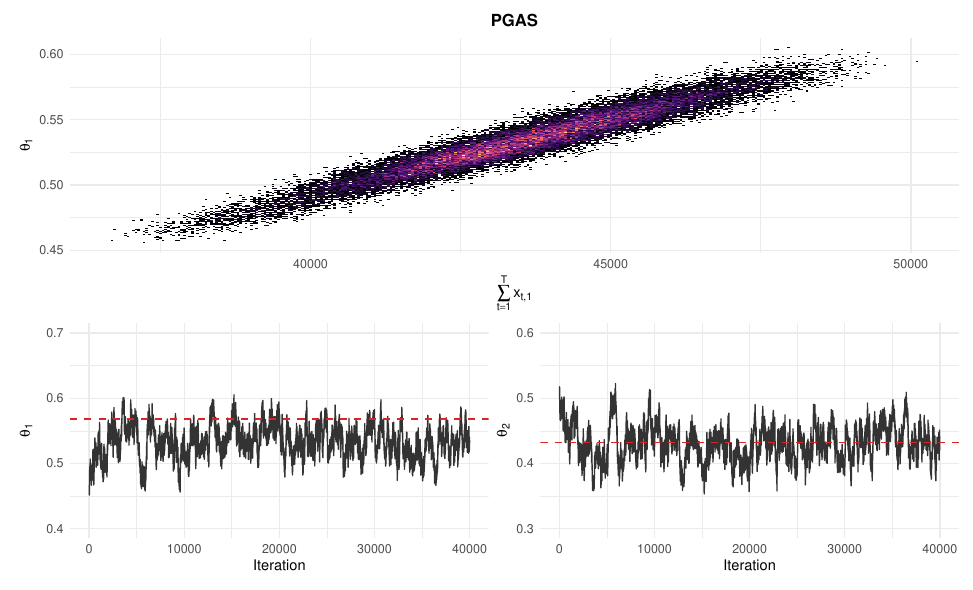}%
 
  \caption{Posterior samples and trace plots for $T=50$ and $p=2$ with PGAS. The upper panel shows the relation between the summary statistic $\sum_{t=1}^Tx_{t,1}$ and $\theta_1$ in the posterior samples obtained after each iteration. The lower panel shows traceplots of $\theta_1$ and $\theta_2$. The burn-in period for the posterior samples was 4000. Additional information on the setup for this example is available in the Supplementary material. }
  \label{fig:pgas-2d-panel}
\end{figure}

Figures~\ref{fig:acf_theta_d7} ($p=7$) and~\ref{supl:fig:acf_theta_d2} ($p=2$) show traceplots and autocorrelation functions for all the four algorithms.
We first consider the PGAS algorithm.
It quickly became evident that Gibbs sampling faced substantial difficulties within this framework due to the very strong internal correlations involved, in particular between the parameters and the latent space. In the simplest case with $p=2$, mixing was very slow in the parameter dimension, and the algorithm required a large number of samples to converge. In the more realistic experiment with $p=7$, achieving full convergence was infeasible in useful time, as shown in the upper left panel of Figure \ref{fig:acf_theta_d7}. 

On the other hand, both the pCSMC-AS and the col-pCSMC-AS algorithms, showed good mixing performance in the parameter space in both settings ($p=2$ and $p=7$), with the latter one slightly better, as illustrated by figures \ref{fig:acf_theta_d7} and \ref{supl:fig:acf_theta_d2}. As expected, the full marginal algorithm is the one that performs best in terms of mixing given that no correlation due to some fixed $\theta$ is carried on from one iteration to the next. Note however that the marginal algorithm has a higher computational cost.
\begin{figure}[H]
  \centering
    \includegraphics[width=0.9\linewidth]{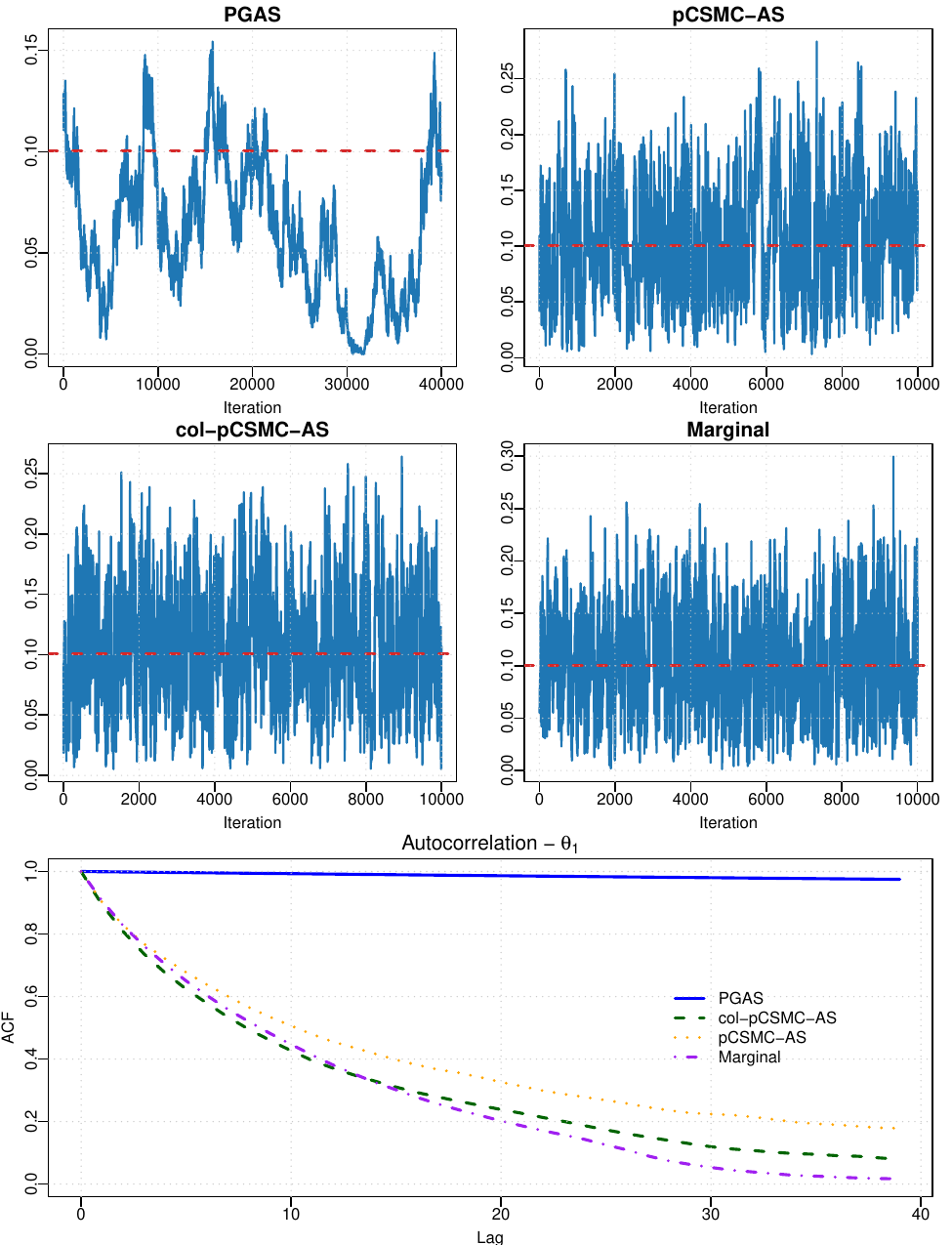}%
  \caption{Traceplots and autocorrelation plots of $\theta_1$ in the experiment with $p=7,T=100$. Burn-in periods are 4.000 for the PGAS and 500 for the rest (which are included in the trace plots but not on the acf plots). Real value is shown in the trace plots as red dashed lines for reference. Estimates for the marginal model were obtained by sampling from the conditional distributions at each iteration.}
  \label{fig:acf_theta_d7}
\end{figure}

We noticed some mixing problems on the first time points of the latent space, specially in the pCSMC-AS algorithm, as shown in Figures \ref{supl:fig:acf_latent_d2} and \ref{supl:fig:acf_latent_d7} . This is confirmed by the plot of the Effective Sample Size (ESS) of the marginal distribution of $x_t$ (Figures \ref{supl:fig:ess_d2} and \ref{supl:fig:ess_d7}) which was very low for small $t$, and a comparison of the Expected Jumping Distance (EJD) across methods. This problem, as previously described by \citet{wigren2019parameter}, also affects the full marginal sampler and is a consequence of the use of uninformative proposals at the first time points. 

The partially collapsed version of the pCSMC-AS algorithm emerges as an optimal compromise in this setting: it achieves performance comparable to that of the full marginal sampler while also incurring a much lower computational cost ($\mathcal{O}(NT)$ evaluations vs $\mathcal{O}(NT^2)$) thanks to the simpler ancestor sampling weights, with lower autocorrelation at the early time points of the latent space compared to p-CSMC-AS.

\subsection{Combined Estimation of the Reproductive Number and Infectivity Profile of the Alpha SARS‑CoV‑2 Variant in Norway}\label{subsec:covid_data}

We extend the approach from the previous section to address the more complex problem of estimating both the reproductive number and the infectivity profile simultaneously from real SARS-CoV-2 data, framing the analysis in the growing phase of the Alpha variant in Norway.
Several other parameters involved are assumed known.
In addition to an analysis based on real data, we also include in section \ref{supl:sec:combined} of the SM an example of the application of the same inference strategy to synthetic data for which the ground truth of both $\bm R_{1:T}$ and $\theta_{1:p}$ was known, as a proof of concept of the method. 

We approached the inference problem by iteratively targeting $p(\theta_{1:p},\bm x_{1:T,1:p}|\bm R_{1:T},\bm y_{1:T})$ with a col-pCSMC-AS algorithm, and $p(\bm R_{1:T}|\theta_{1:p},\bm x_{1:T,1:p},\bm y_{1:T})$ through a standard CSMC-AS routine. For this more complex setting, we were not able to make the other SMC algorithms to fully converge within a reasonable timeframe. We therefore only made a comparison with a JAGS implementation of the model.

Because we assume independent priors for each $\theta_m$, we fixed the scale by setting $\theta_p=0.05$ to make the combined inference of $\{\bm R_{1:T},\theta_{1:p},\bm x_{1:T}\}$ identifiable.

\subsubsection{The data}

The observations $\bm y_{1:T}$ are daily hospital covid-19 admissions in Norway during the period between February $1^{st}$ 2021 and March $15^{th}$ 2021, which corresponds to the growing phase of the Alpha (B.1.1.7) variant outbreak (see Figure \ref{fig:post_fit_covid}). 
The data was originally gathered by the \href{https://www.fhi.no/ss/korona/koronavirus/norsk-beredskapsregister-for-covid-19/}{\textit{Emergency preparedness register for COVID-19}} (Beredt C19) and subsequently made available for research purposes.

\subsubsection{Results}

Given that the mean Serial interval for the Alpha variant is expected to lay somewhere between 2 to 5 days (3.47 days, 95\% CI: 2.52–4.41, and its generation time between 4 and 5 days (4.35, 95\% CI: 3.91–4.8 \citep{xu2023assessing}), we set $p=8$ to fully capture the most relevant period. We assumed an overall hospitalization probability given infection of 0.047 (4.7\%), in line with previous studies \citep{nyberg2021risk}. With respect to the variance of the random walk model for $R_t$ on the log scale, we have used $\sigma_R=0.15$. 
We initialized the model by assuming $x_1\sim \text{Pois}(100)$ and $x_t=0$ for $t<1$. This is consistent with the expected number of infections implied by the observed hospital incidence at the start of the considered time-period. 
We ran a total of 90.000 iterations of the pCSMC-AS algorithm including 5.000 burnin samples. For JAGS, we have drawn 290.000 samples including 80.000 as burnin.

\begin{figure}[H]
  \centering
    \includegraphics[width=\linewidth]{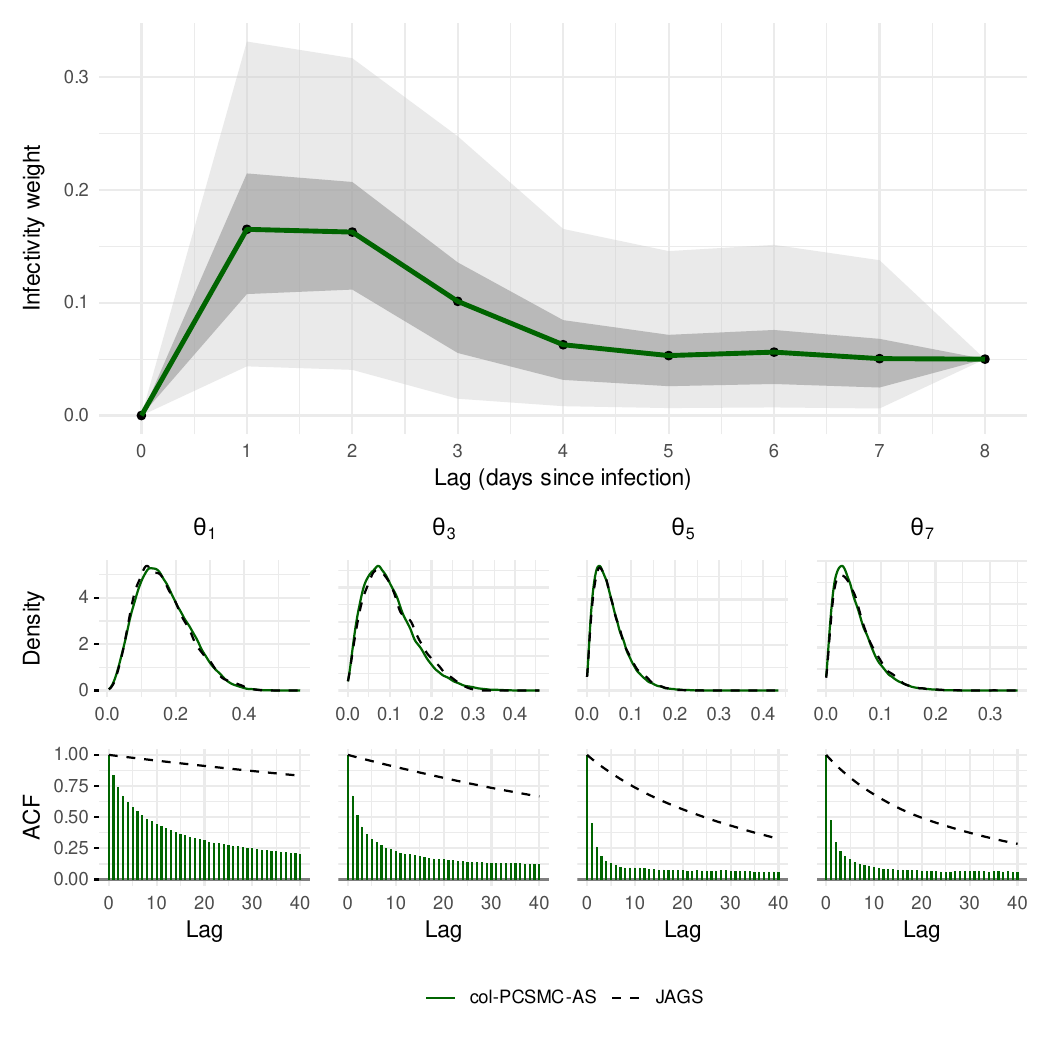}%
  \caption{Posterior distribution of $\theta_{1:p}$, and autocorrelation function comparing the performance of JAGS (slice sampler) and the partially collapsed PCSMC-AS algorithm. In the upper panel, all the posterior distributions are shown together so that $\theta_1$ corresponds to Lag 1, $\theta_2$ corresponds to Lag 2 and so on.}
  \label{fig:covid:post_theta}
\end{figure}

The posterior distribution of $\bm\theta_{1:p}$, as shown in Figure \ref{fig:covid:post_theta}, yielded an estimated average hospitalization interval of 3.4 days (2.8-4.1), which aligns quite well with the expected distribution of the generation time given the uncertainties at hand and the fact that the observed interval is expected to be shorter in growing phases of the epidemic \citep{park2021forward}.

\begin{figure}[H]
  \centering
    \includegraphics[width=\linewidth]{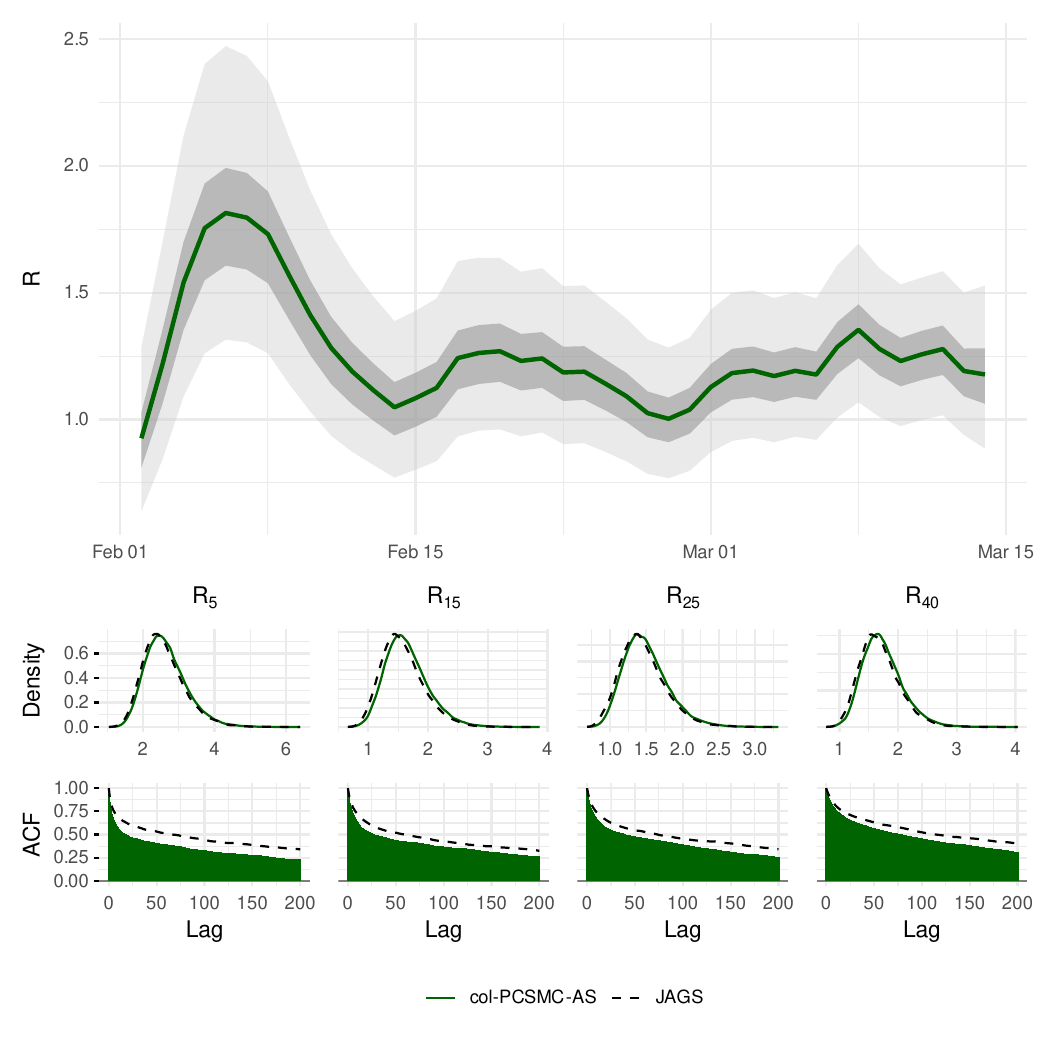}%
  \caption{Top figure shows the posterior distribution of $R_t$ with both the interquartile range and the 95\% credible interval, normalized as $R_t \sum_{m=1}^p\theta_m$. Density plots and autocorrelation functions are shown below comparing the mixing performance of the partially collapsed PCSMC-AS algorithm against JAGS (slice sampler).}
  \label{fig:covid:post_R}
\end{figure}

Even though the posterior distribution of the infectivity profile may not be very informative because of all the uncertainties involved, the method provides the posterior distribution of the reproductive number without making any specific assumptions on the distribution of the hospitalisation/serial interval. Figure \ref{fig:covid:post_R} shows the posterior distribution of $R_t$ obtained through the col-pCSMC-AS algorithm. 

We see in the autocorrelation plots that, in general, and specially in the case of the parameters, the mixing performance is much better with the partially collapsed pCSMC-AS than with the slice sampler (JAGS). In this regard, it is worth noting that it is only the latent space and the weights $\theta$ that are inferred through the col-pCSMC-AS algorithm, whereas $R_t$, is updated through a separate CSMC step. This largely explains the different mixing in both dimensions, compared to the slice sampler. The results of this comparison against slice sampling should not be generalized, as they are specific to this particular case and our goal is to compare performance with alternatives within the Particle Gibbs framework. The JAGS results are shown mainly as a confirmation of the posterior distribution obtained.

The goodness of fit and the uncertainty coverage of the model against real observations, as shown in Figure \ref{fig:post_fit_covid}, is very good.

\begin{figure}[H]
 \centering
    \includegraphics[width=0.75\linewidth]{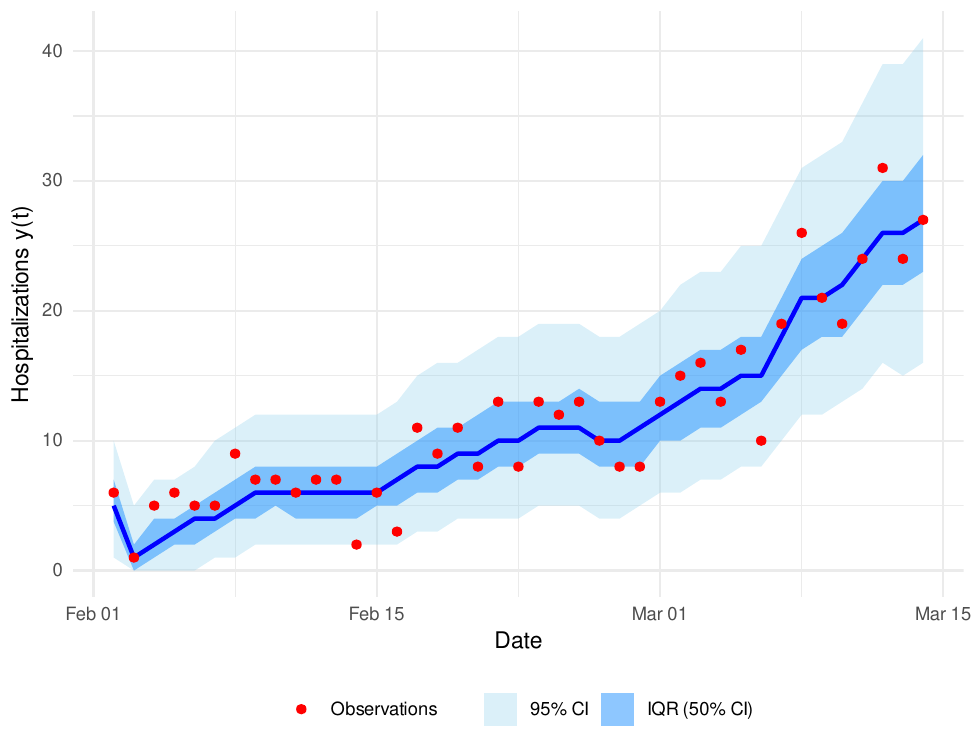}%
  \caption{Posterior fit including median and credible intervals of simulated values against real observations, obtained from 10.000 posterior samples of the col-pCSMC-AS algorithm.}
  \label{fig:post_fit_covid}
\end{figure}

\section{Conclusion}\label{sec-conc}

In the presence of unknown static parameters, PGAS is the preferred option whenever the conditional distribution of the parameters given the latent variables is available. However, in practice, its applicability is constrained to settings in which no strong internal correlations are involved since this reduces mixing to levels that render the method useless. Other common alternatives such as PMMH require a thorough choice of proposals which can be really challenging in high dimensional spaces. And even gradient based alternatives are not well defined in models with discrete latent variables.

In this work we have defined a generalized framework of CSMC algorithms (pCSMC-AS) for the combined inference of static unknown parameters and latent variables in state space models by extending the existing particle learning methodology beyond the online setting. We have showed how, for a certain type of models, it can outperform PGAS and, given its high flexibility, become another useful tool to consider when addressing inference in environments where mixing can be challenging.

In order to avoid particle degeneracy in CSMC-based algorithms, incorporating some form of backward sampling strategy is indispensable. However, this comes at a high computational cost, especially in non-Markovian settings. With our partially collapsed pCSMC-AS algorithm, we have shown that it is possible to mitigate this problem without compromising the efficiency of proposal generation in the forward sweep, leading to a much better mixing. 

Recent work showed that the upper bound of the mixing time, defined as the number of updates needed to produce a sample that is $\epsilon$ close in distribution to the target, for the CSMC-AS algorithms is $\mathcal{O}(T)$ \citep{lee2020coupled} or $\mathcal{O}(\log T)$ \citep{karjalainen2025mixing} under different strong mixing assumptions. However, these assumptions do not hold in many real life applications. It is well known that the performance of these backward sampling strategies deteriorates with models involving weakly informative observations, slowly mixing dynamics \citep{karppinen2024conditional} and, more importantly, when there is non-Markovian structure in the latent space. In these cases, AS/BS may alleviate the degeneracy problem to some extent but the particle system would still collapse \citep{lindsten2014particle} leading to poor mixing at the first time points. 

By contrast, when a substantial amount of information is transferred across iterations (e.g. in PGAS), the proposals for the first time points are generally better guided, which reduces the variance of the weights and, consequently, the severity of degeneracy. This improvement comes at a cost, however, in terms of poorer MCMC mixing in the presence of strong internal correlations, and a greater risk of getting trapped in local optima. Thus, there is an intrinsic trade-off between reducing particle degeneracy within the CSMC sampler by generating informed proposals and maintaining good mixing in the outer MCMC algorithm. 

We have defined a very general framework in which  intermediate auxiliary parameters can be used to generate proposals through different strategies, determined by the choice of distributions in the extended space, in a similar fashion to how intermediate target functions are used in twisted models. 

In addition, the pCSMC-AS algorithms indirectly exploits the ancestor sampling step as an intermediate approach between the inefficient scenario of generating trajectories that are independent from the reference and the Particle Gibbs scheme, in which all trajectories are coupled through a single common parameter which, as mentioned before, can lead to poor mixing in certain situations.

Possible future lines of work include, among others: exploring potentially optimal choices of auxiliary variables and distributions for different types of applications; considering alternative resampling schemes beyond multinomial resampling which may improve the performance specially in the presence of low informative observations \citep{karppinen2024conditional}; and investigating the incorporation of local proposals, as in \citet{malory2021bayesian,finke2023conditional,corenflos2024particle}, just to name a few alternatives.

\section{Code}\label{sec-code}
Code for all numerical simulations is available at
\href{https://github.com/Adizlois/pCSMC-AS}{\texttt{https://github.com/Adizlois/pCSMC-AS}}.

\section{Disclosure statement}\label{sec-disclosure-statement}
The authors declare that they have no conflicts of interest.

\section{Declaration of Generative AI Use}
The authors used generative AI (ChatGPT versions 5 to 5.6) solely to check spelling and improve the clarity of the text. No generative AI was used to generate scientific content, analyses, or results.

\section{Acknowledgements}
This work was supported by the Research Council of Norway, Integreat - Norwegian Centre for knowledge-driven machine learning, project number 332645.

\section{Supplementary Material}
Title: \textit{
Supplementary material for 'Parameter estimation in Conditional Sequential Monte Carlo algorithms through Particle Learning'}.

Description: It includes some derivations of the conjugate models, a description of the different settings of the experiments and detailed plots to inform about the performance of the algorithms. In addition, we include results from an additional experiment with synthetic data where both $R_{1:T}$ and $\theta$ are estimated.

\phantomsection\label{supplementary-material}
\bigskip




\newpage

\bibliographystyle{chicago}
\bibliography{bibliography}

\setcounter{section}{0}
\renewcommand{\thesection}{S.\arabic{section}}

\section*{Supplementary material for 'Parameter estimation in Conditional Sequential Monte Carlo algorithms through Particle Learning'}

\renewcommand{\thefigure}{S.\arabic{figure}}
\setcounter{figure}{0}

\section{Gamma-Poisson conjugacy}\label{supl:subsec:gamma-pois}

Because of the conjugacy between the Gamma and Poisson distributions we get that, the conditional distribution $p(\theta_m|R_t,x_{1:T,1:p})$ is available in close form:
\begin{align}
p(\theta_m | \bm x_{1:T,1:p}, \bm R_{1:T}, \bm y_{1:T}) 
&\propto p(\theta_m)\;
   p\big(\bm x_{1:T,1:p} | \theta_{1:p}, \bm R_{1:T}\big)\;
   p\big(\bm y_{1:T}|\bm x_{1:T}\big) \notag \\
&\propto p(\theta_m)\;
   \prod_{t=1}^T p\big(x_{t,m}| \theta_m, R_t, x_{t-m}\big) \notag \\
&\propto \left[\theta_m^{\alpha_m-1} e^{-\beta_m \theta_m}\right]
   \times \prod_{t=1}^T
     \left[\frac{(\theta_m \lambda_{t,m})^{x_{t,m}}}{x_{t,m}!}\,
           e^{-\theta_m \lambda_{t,m}}\right] \notag \\
&\propto \theta_m^{\alpha_m-1 + \sum_{t=1}^T x_{t,m}}\,
    \exp\!\Big(-\theta_m \big[\beta_m + \sum_{t=1}^T \lambda_{t,m}\big]\Big) \label{eq:Gam-Pois}
\end{align}

Where $\lambda_{t,m}:=R_t\, \sum_{i=1}^p x_{t-m,i}$
Define the sufficient statistics per-lag
\[
C_{T,m} := \sum_{t=1}^T x_{t,m},
\qquad  
E_{T,m} := \sum_{t=1}^T \lambda_{t,m}
     = \sum_{t=1}^T \big(R_t\, \sum_{i=1}^p x_{t-m,i}\big).
\]
Recognizing the Gamma kernel, the full conditional is then:
\[
\theta_m | \bm x_{1:T,1:p}, \bm R_{1:T}, \bm y_{1:T}
\,\sim\, \mathrm{Gamma}\!\big(\alpha_m + C_{T,m},\; \beta_m + E_{T,m}\big),
\qquad m=1,\dots,p.
\]

\section{Full marginal sampler}\label{supl:sec:full_marginal}

Let's define $x_{t}$ as the total number of counts at some time point $t$:
\[
x_{t} = \sum_{j=1}^p x_{t,j}
\]
We know that each per-lag count is Poisson distributed
\[
x_{t,m} | \theta_m, \bm x_{1:t-1,1:p}
\sim \text{Poisson}\!\big( R_t \,\theta_m\, x_{t-m} \big)
\]
We want the marginal distribution
\[
p(x_{t,m} | \bm x_{1:t-1,1:p}) =
\int_0^\infty 
  p(x_{t,m}|\theta_m,\bm x_{1:t-1,1:p})\,
  p(\theta_m|\bm x_{1:t-1})\,d\theta_m
\]
Assuming a Gamma prior $(\alpha,\beta)$ on the parameter:
\[
p(\theta_m) = \frac{\beta^{\alpha}}{\Gamma(\alpha)}\,
              \theta_m^{\alpha - 1} e^{-\beta \theta_m}.
\]
We get that:
\[
p(x_{t,m} | \bm x_{1:t-1,1:p})
= \int_0^\infty 
    \frac{e^{-R_t\theta_m x_{t-m}}(R_t\theta_m x_{t-m})^{x_{t,m}}}{x_{t,m}!}
    \frac{\beta^{\alpha}}{\Gamma(\alpha)}\,
    \theta_m^{\alpha - 1} e^{-\beta \theta_m}
  \, d\theta_m
\]

Gather terms that depend on $\theta_m$ inside the integral:
\[
p(x_{t,m} | \bm x_{1:t-1,1:p})
= \frac{(R_t x_{t-m})^{x_{t,m}}\,\beta^{\alpha}}{x_{t,m}!\,\Gamma(\alpha)}
  \int_0^\infty 
    \theta_m^{x_{t,m} + \alpha - 1}
    e^{-(\beta + R_t x_{t-m})\theta_m}
  \, d\theta_m.
\]
We recognize the functional form of a Gamma distribution. Therefore:
\begin{align*}
p(x_{t,m} | \bm x_{1:t-1,1:p})
&= \frac{\Gamma(x_{t,m} + \alpha)}{\Gamma(\alpha)\,x_{t,m}!}
  \frac{\beta^{\alpha}\,(R_t x_{t-m})^{x_{t,m}}}
       {(\beta + R_t x_{t-m})^{x_{t,m} + \alpha}}   \\
       &=\binom{x_{t,m} + \alpha - 1}{x_{t,m}}\, \big(\frac{\beta}{(\beta + R_t x_{t-m})}\big)^\alpha \big(\frac{R_t x_{t-m}}{(\beta + R_t x_{t-m})}\big)^{x_{t,m}} 
\end{align*}

Which corresponds to the pmf of a Negative Binomial with parameters: $r=\alpha$ and $p = \frac{\beta}{\beta + R_t x_{t-m}}$

\section{Tests with synthetic data: Inference on $\theta$} \label{supl:sec:tests summaries}

\begin{table}[H]
    \centering
    \begin{tabular}{lccc|ccc}

        & \multicolumn{3}{c|}{$p = 2$} & \multicolumn{3}{c}{$p = 7$} \\
        Method & Iterations & Burn-in & $N$ & Iterations & Burn-in & $N$ \\
        \hline
        PGAS                & 40.000  & 4.000  & 300  & 40.000  &  - & 300  \\
        pCSMC               & 10.000  & 500  & 300  & 10.000  & 500  & 300  \\
        pCSMC (collapsed)   & 10.000  & 500  & 300  & 10.000  & 500  &  300 \\
        Marginal            &10.000   & 500  & 300  & 10.000  & 500  & 300  \\
        \hline
        JAGS & 50.000 & 50.000 & - & 200.000 & 50.000 & - \\
        \hline
    \end{tabular}
    \caption{Simulation settings for $p=2$ and $p=7$. Three independent chains were run for each configuration $N$ corresponds to the number of particles used. An initial seeding of $2p$ cases was used at time t=0 in all the models.}
    \label{supl:tab:settings}
\end{table}

\begin{figure}[H]
  \centering
  \begin{minipage}[b]{0.48\textwidth}
    \centering
    \includegraphics[width=\textwidth]{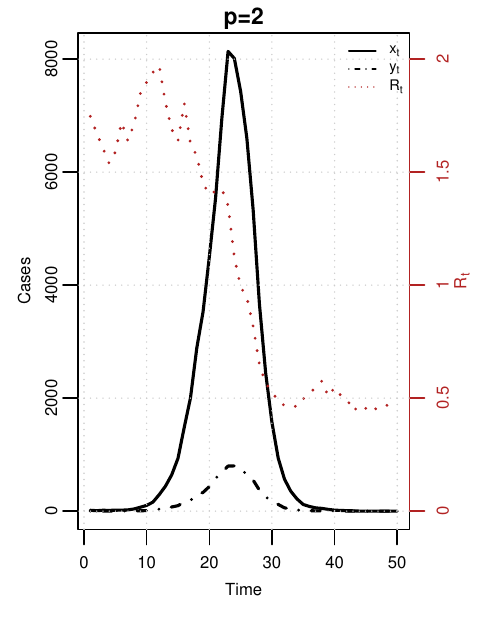}
    \end{minipage}
  \begin{minipage}[b]{0.48\textwidth}
    \centering
    \includegraphics[width=\textwidth]{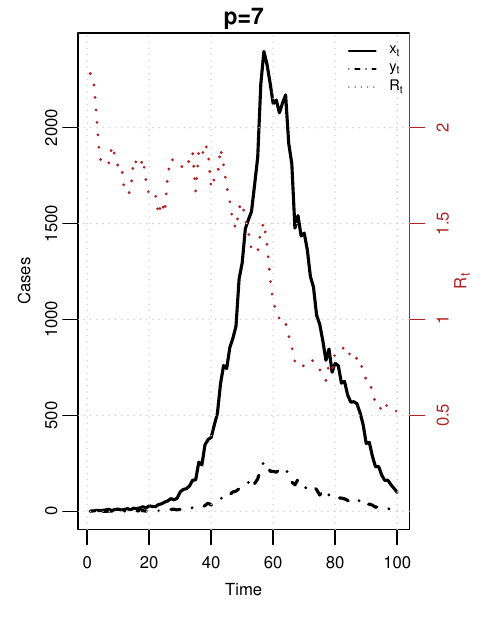}
    \end{minipage}
    \caption{Summary of the synthetic data generated for the experiments. Here $x_t$ corresponds to the number of cases, $y_t$ the observed counts and $R_t$ the (assumed known) reproduction number. The observational model uses $p_{hosp}=0.1$. The true generative parameters were (reported here as rounded values)
    $\theta_{real}=\{0.568,0.432\}$ and $\theta_{real}=\{0.100,0.119,0.183,0.114,0.116,0.175,0.192\}$ for the synthetic data with $p=2$ and p=7 respectively.}
    \label{supl:fig:syn_data}
\end{figure}
\newpage
\subsection{p=2}
\begin{figure}[H]
  \centering
\includegraphics[width=0.85\linewidth]{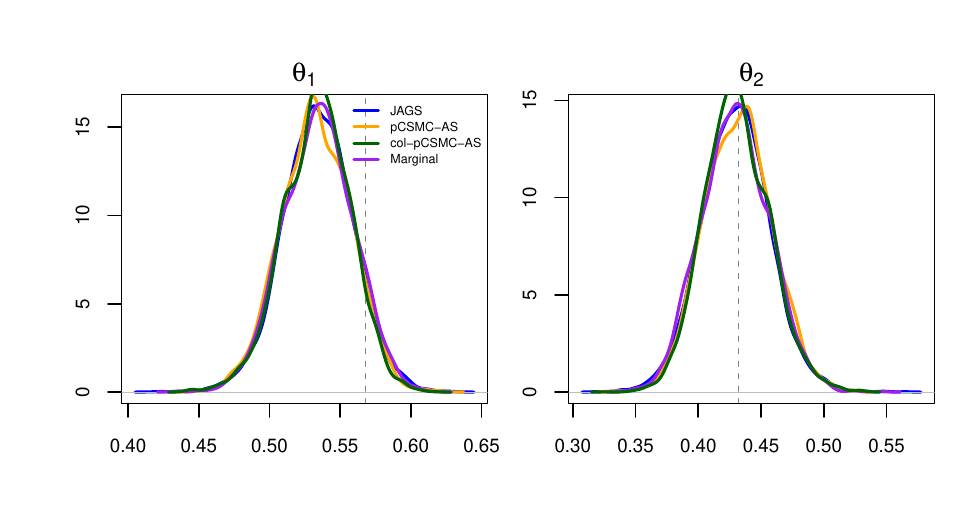}%
  \caption{Comparison of the posterior samples of the marginal distribution of the parameters for p=2 obtained with pCSMC-AS, col-pCSMC-AS and the full marginal CSMC against JAGS (slice sampler). The real values used in the generative model are shown as vertical dotted lines for reference. Parameter samples for the marginal model were obtained by sampling from the conditional distirbution available after each iteration.}
\label{supl:fig:posteriors_d2}
\end{figure}
\begin{figure}[H]
  \centering
\includegraphics[width=0.85\linewidth]{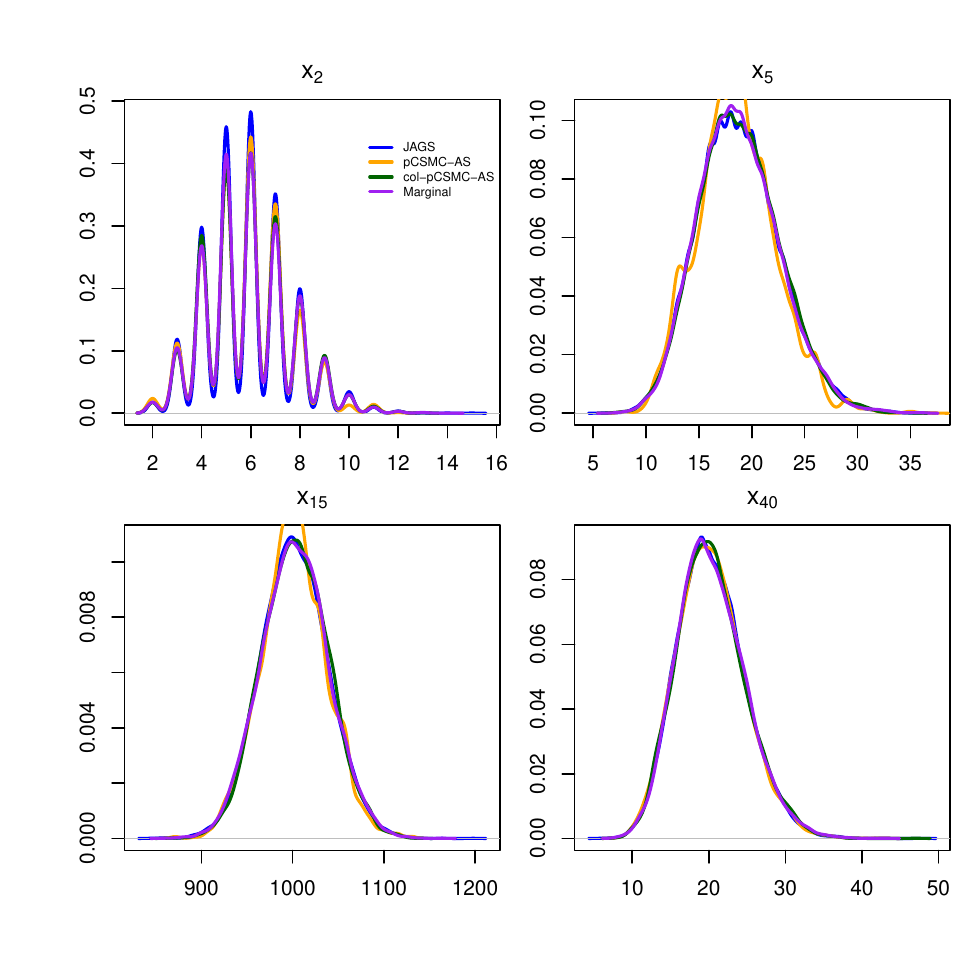}%
  \caption{Comparison of the marginal posterior distributions of $x_t=\sum_{m=1}^px_{t,m}$ for $t=\{2,5,15,40\}$ obtained with pCSMC-AS, col-pCSMC-AS and the full marginal CSMC against JAGS with $p=2$}
\label{supl:fig:posteriors_latentd2}
\end{figure}
\newpage
\begin{figure}[H]
  \centering
\includegraphics[width=0.9\linewidth]{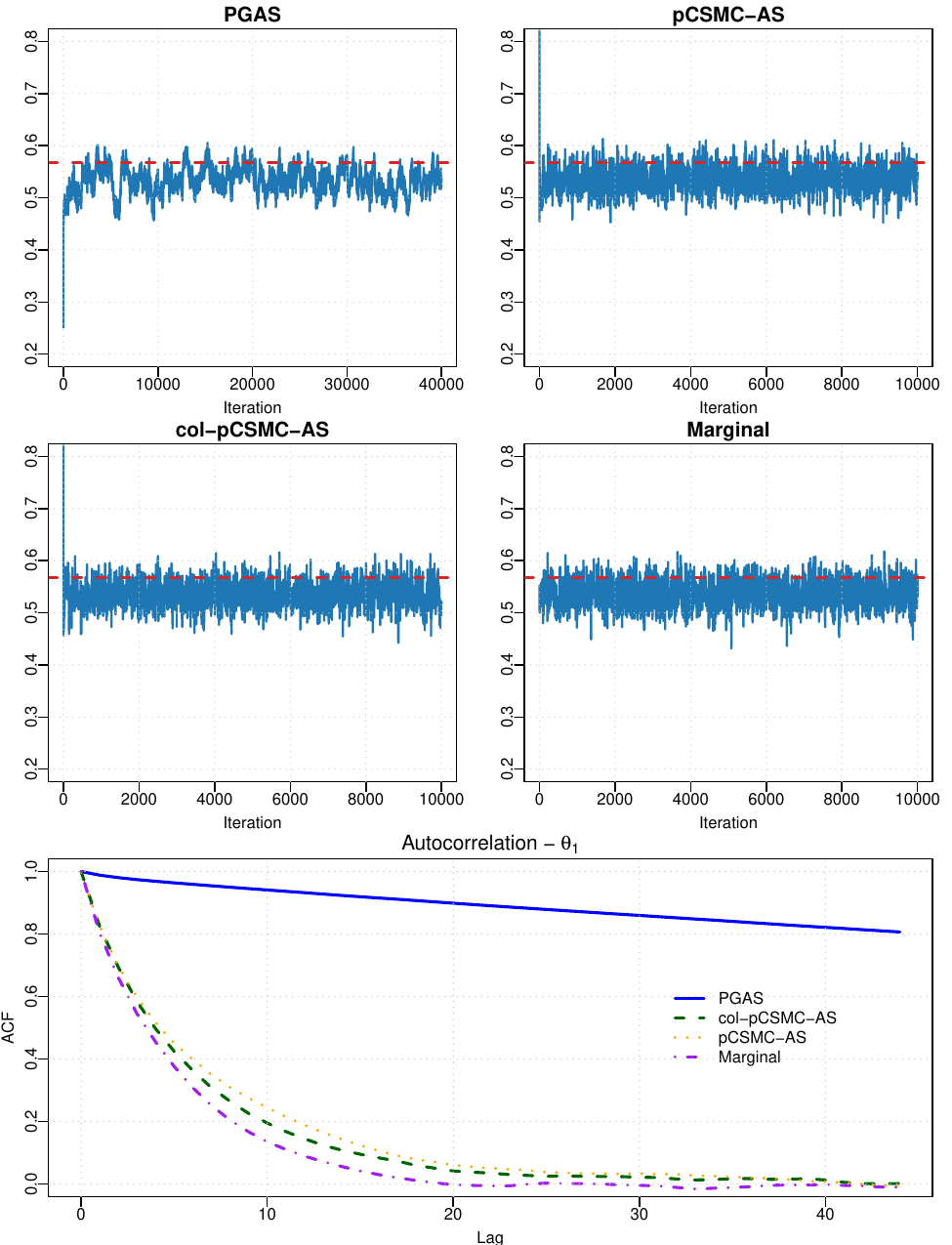}%
  \caption{Trace and autocorrelation plots of $\theta_1$ in the test with $p=2$. Burn-in periods are 4000 for the PGAS and 500 for the rest. Real values are included in the trace plots as red dashed lines for reference. Parameter samples for the marginal model were obtained by sampling from the conditional distribution available after each iteration.}
  \label{supl:fig:acf_theta_d2}
\end{figure}
\newpage
\begin{figure}[H]
  \centering
    \includegraphics[width=0.9\linewidth]{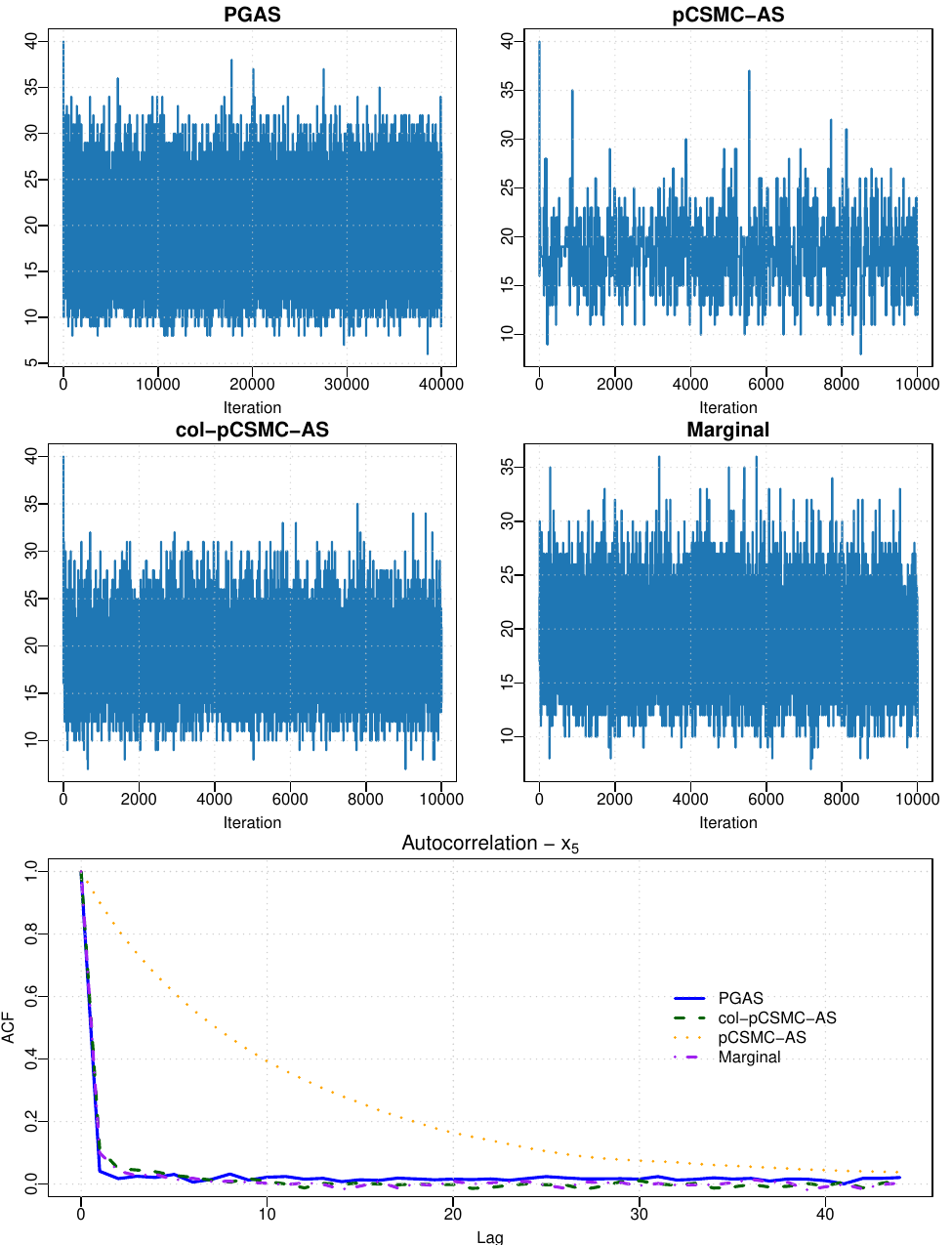}%
  \caption{Trace and autocorrelation plots of $x_5=\sum_{m=1}^px_{5,m}$ in the test with $p=2$.}
  \label{supl:fig:acf_latent_d2}
\end{figure}
\newpage
\begin{figure}[H]
  \centering
    \includegraphics[width=0.45\linewidth]{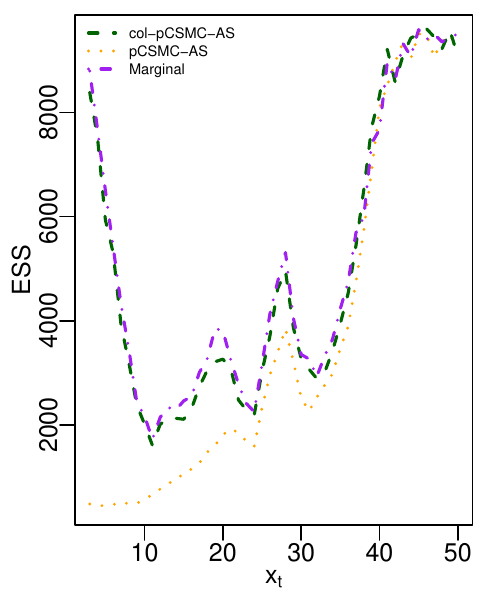}%
    \includegraphics[width=0.45\linewidth]{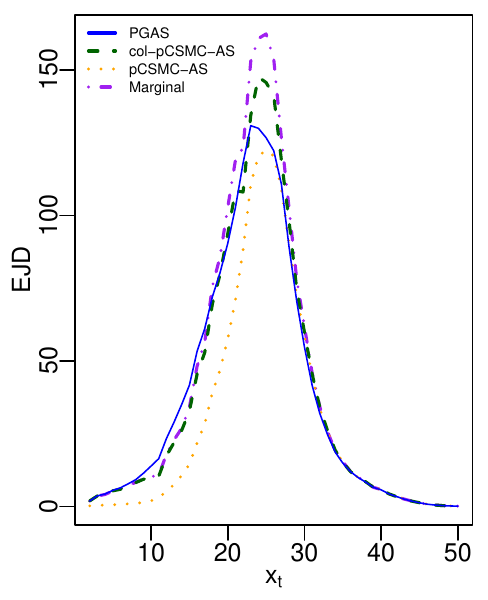}%
  \caption{Effective Sample Size (ESS) and Expected Jumping Distance (EJD) of the marginal distributions of $x_t=\sum_{m=1}^px_{t,m}$ in the test with $p=2$.}
  \label{supl:fig:ess_d2}
\end{figure}

\newpage
\subsection{p=7}

\begin{figure}[ht]
  \centering
    \includegraphics[width=0.8\linewidth]{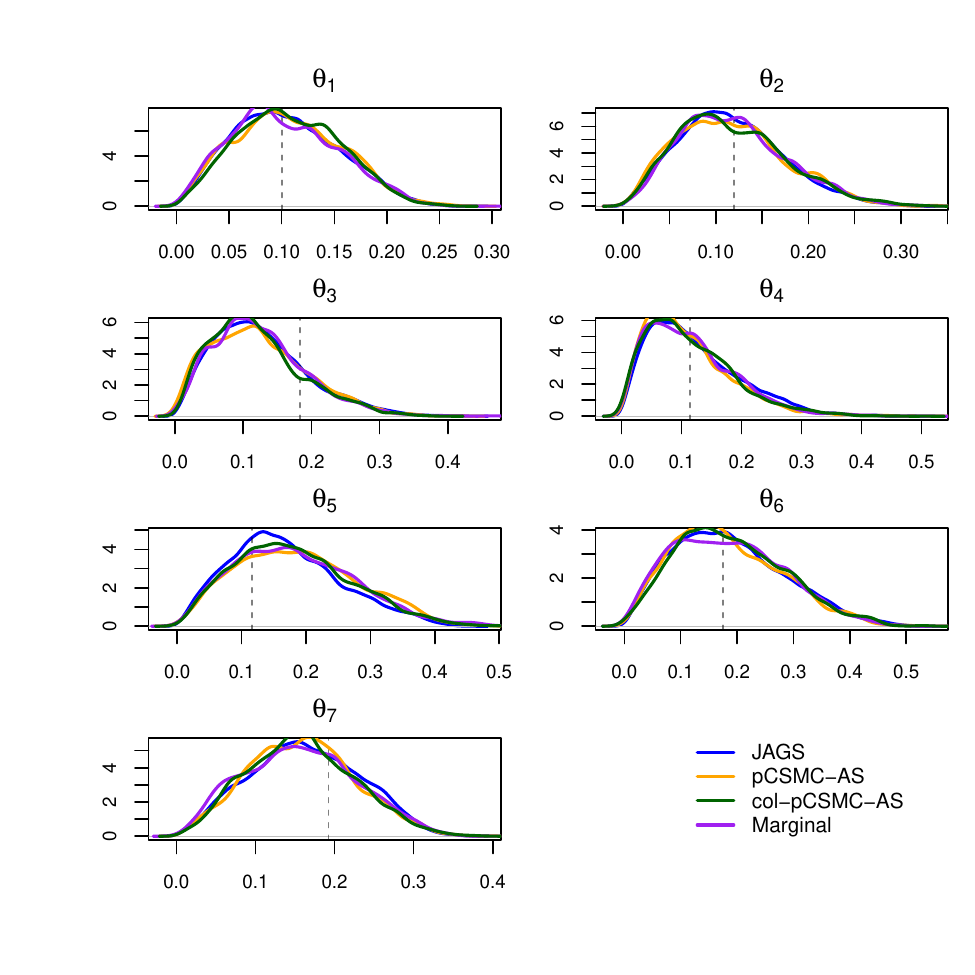}%
  \caption{Comparison of the posterior samples of the marginal distributions of the parameters for p=7 obtained with pCSMC-AS, col-pCSMC-AS and the full marginal CSMC against JAGS (slice sampler). The real values used in the generative model are shown as vertical dotted lines for reference. Parameter sampler for the full marginal model were obtained by sampling from the conditional distribution available after each iteration.}
\label{supl:fig:posteriors_d7}
\end{figure}

\newpage
\begin{figure}[H]
  \centering
    \includegraphics[width=0.8\linewidth]{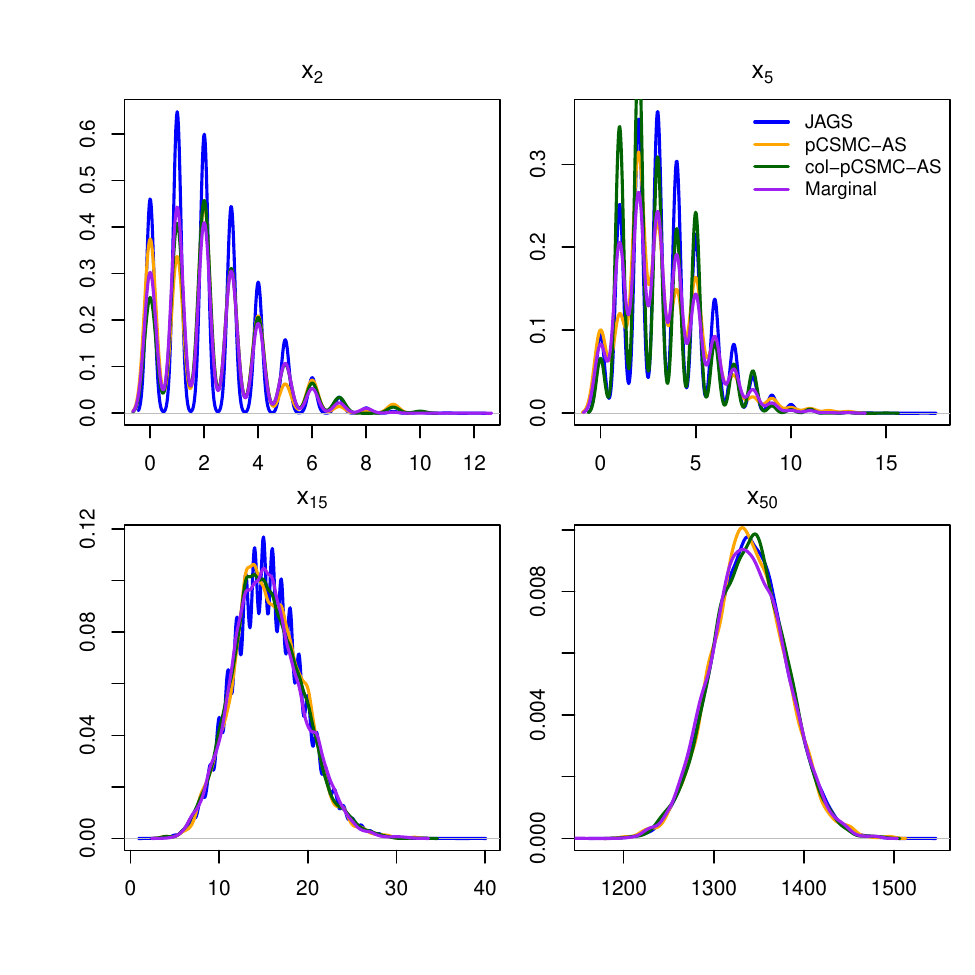}%
  \caption{Comparison of the marginal posterior distributions of $x_t=\sum_{m=1}^px_{t,m}$ for $t=\{2,5,15,50\}$ obtained with pCSMC-AS, col-pCSMC-AS and the full marginal CSMC against JAGS with $p=7$}
  \label{supl:fig:posteriors_latentd7}
\end{figure}
\newpage
\begin{figure}[H]
  \centering
    \includegraphics[width=0.9\linewidth]{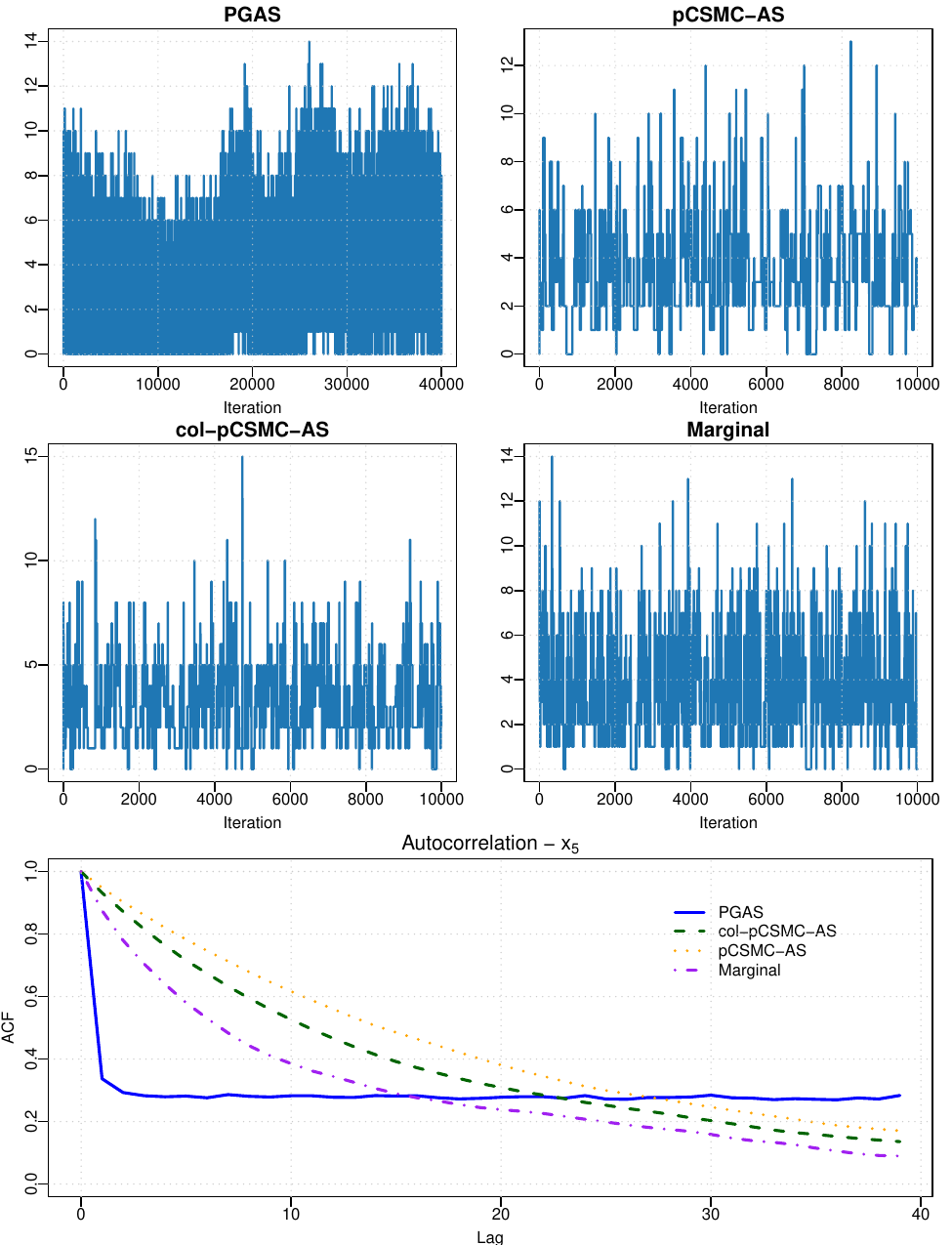}%
  \caption{Trace and autocorrelation plots of $s_5=\sum_{m=1}^px_{5,m}$ in the test with $p=7$. Burn-in periods used in the autocorrelation plots are 4000 for the PGAS and 500 for the rest. Note that PGAS did not achieve full convergence in this experiment and is only shown for comparison.}
  \label{supl:fig:acf_latent_d7}
\end{figure}
\newpage
\begin{figure}[H]
  \centering
    \includegraphics[width=0.45\linewidth]{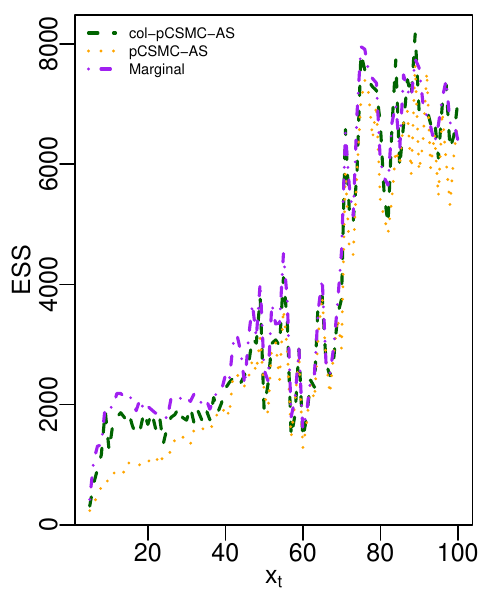}%
     \includegraphics[width=0.45\linewidth]{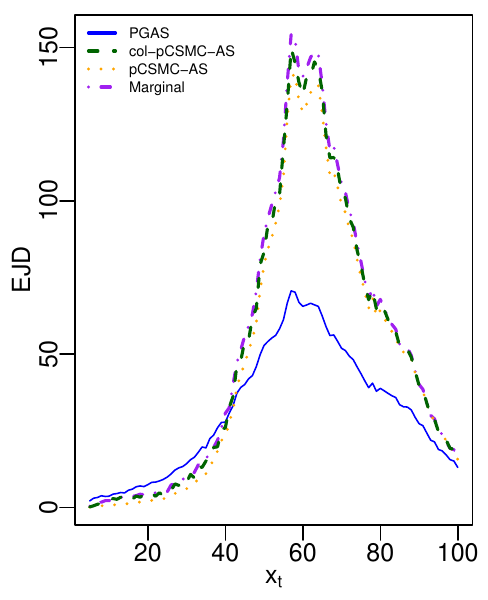}%
  \caption{Effective Sample Size (ESS) and Expected Jumping Distance (EJD) of the marginal distributions of $x_t=\sum_{m=1}^px_{t,m}$ in the test with $p=7$.}
  \label{supl:fig:ess_d7}
\end{figure}

\newpage
\section{Tests with synthetic data: Combined estimation of $\theta$ and $R_t$}\label{supl:sec:combined}

Figure \ref{supl:fig:syn_combined} shows the synthetic data generated to address combined inference of both the weights $\theta$ and the scaling parameter $R_{1:T}$ where we set p=4. 

\begin{figure}[H]
  \centering
    \includegraphics[width=0.85\linewidth]{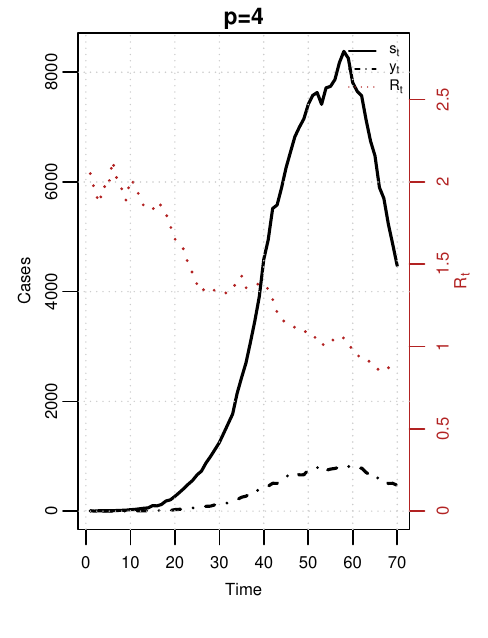}
  \caption{Synthetic data generated to address combined inference of $R_t$ and $\theta$.}
  \label{supl:fig:syn_combined}
\end{figure}

The combined inference was addressed by sequentially targetting each of the conditionals $p(\theta,\bm x_{1:T}|\bm y_{1:T},\bm R_{1:T})$ and $p(\bm R_{1:T}|\bm y_{1:T},\theta,\bm x_{1:T})$.

For the first step we have used the exact same partially collapsed implementation introduced in the previous example where $R$ was assumed known. The only difference was that, in order to avoid non-identifiability issues (there are infinite number of combinations of unconstrained $\theta$ and $R_t$ that would lead to the exact same poisson mean at each time point) we set $\theta_1=1$ so that the scale is fixed and the entire problem becomes identifiable, while holding conjugacy. For the other weights, we apply Gamma priors with $\alpha_0=2$ and $\beta_0=4$.

For the estimation of $\bm R_{1:T}$ we implemented a pure CSMC-AS algorithm to target the conditional  $p(\bm R_{1:T}|\bm y_{1:T},\theta,\bm x_{1:T})$ where we assumed autoregressive model AR(1) in the log scale with some known parameter $\phi$ and variance $\sigma^2_R$ through the latent space $R_{1:T}$:
\[
\log R_t \sim \mathcal{N}(\phi\log R_{t-1}, \sigma_R^2)
\]
Where we set $\sigma_R=0.25$ and $\phi=0.95$. 
Note that in this step, $x$ plays the role of the observations given the conditional independence $p(\bm R_{1:T}|\bm x_{1:T},\bm y_{1:T},\theta)=p(\bm R_{1:T}|\bm x_{1:T},\theta)$. The weights, for the bootstrap setting, can be then computed through Poisson likelihoods: $w_t^i=p(R^i_t|x_t,\theta)$.

The ancestor sampling weigths in this case corresponded to:
\[
\tilde w_{t-1|T}^i \propto w_{t-1}^i p(\log R^*_{t}|\log R_{t-1})
\]
for some reference path $\bm R^*_{1:T}$.

Because achieving full convergence in this setting with PGAS was completely unfeasible in useful time, we only compare the performance of our partially collapsed p-CSMC-AS algorithm against JAGS (slice sampler).

\begin{figure}[H]
  \centering
    \includegraphics[width=0.85\linewidth]{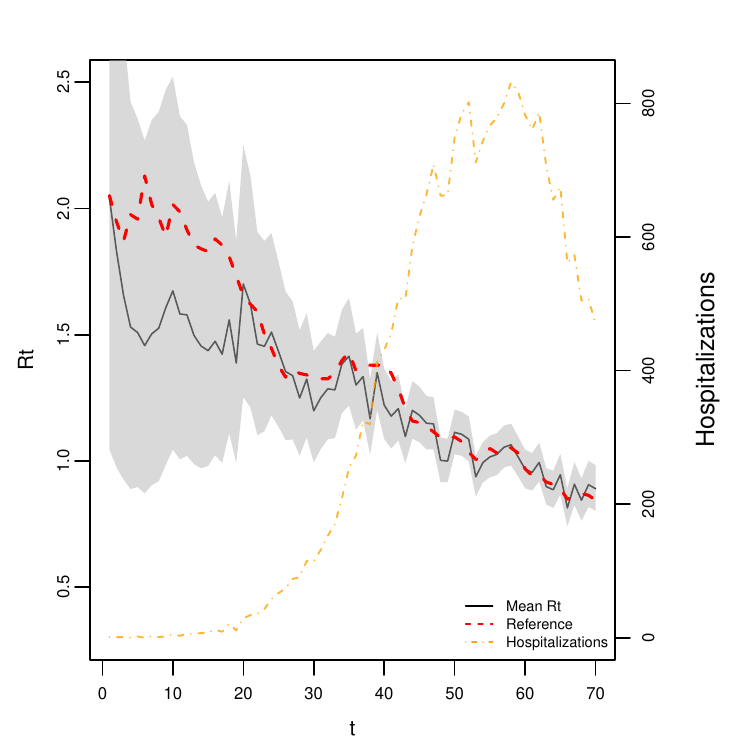}
  \caption{Posterior distribution of $R_{1:T}$ (normalized as $R_t \sum_{m=1}^p\theta_m$) obtained with the col-pCSMC-AS algorithm vs the ground truth, shown alongside the (synthetic) hospitalizations. The shaded region represents the 95\% credible interval.}
  \label{supl:fig:Rt_synthetic}
\end{figure}

\begin{figure}[H]
  \centering
    \includegraphics[width=0.85\linewidth]{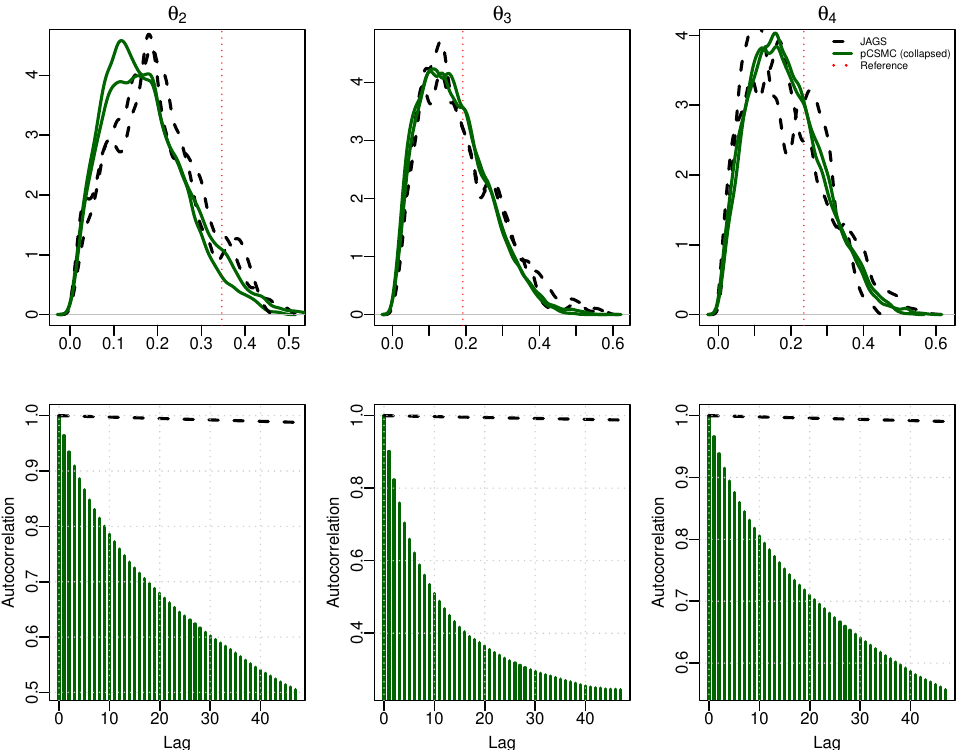}
  \caption{Comparison of posterior densities and ACF obtained through both col-pCSMC-AS and JAGS for the parameters $\theta$. Vertical dotted lines represent the ground truth values for reference.}
  \label{supl:fig:acf_combined}
\end{figure}

As figure \ref{supl:fig:acf_combined} shows, the mixing with the col-pCSMC-AS algorithm was much better than with the slice sampler. Note that, despite of the fact that 320.000 iterations were run in JAGS (120.000 burnin plus 200.000 samples) only weak convergence was obtained (i.e. Gelman-Rubin over 1.10 for some of the latent variables). On the other hand, after 60.000 iterations of the col-pCSMC-AS algorithm (including 5000 samples as burnin) full convergence was achieved.

\begin{figure}[H]
 \centering
    \includegraphics[width=0.75\linewidth]{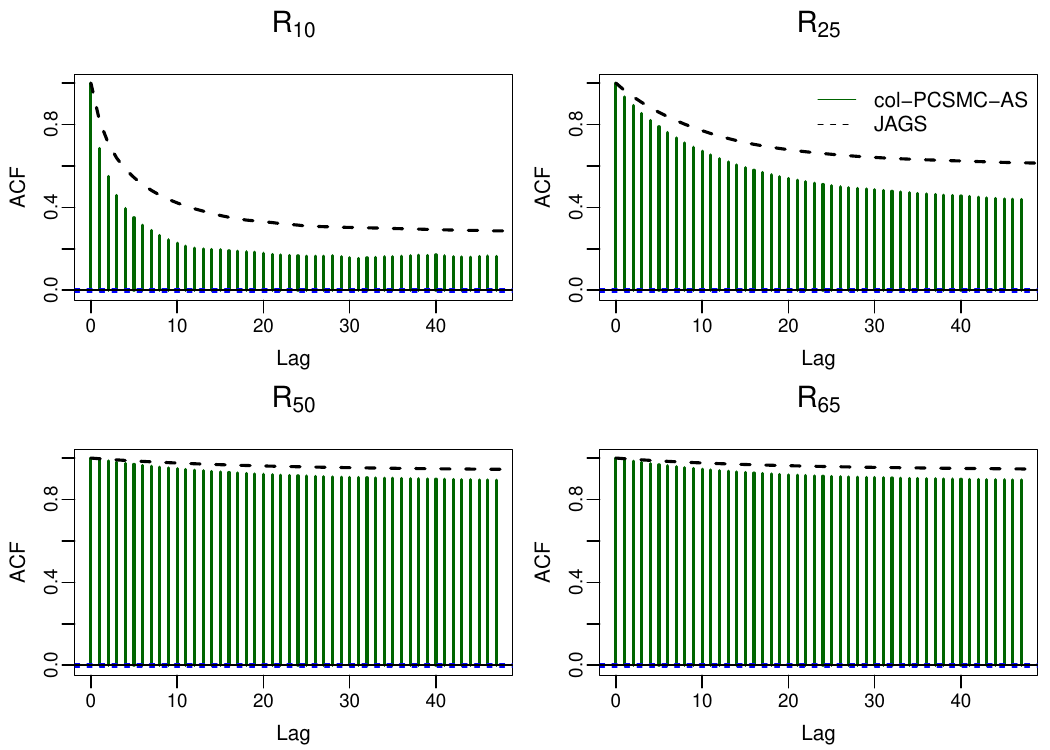}%
  \caption{Autocorrelation function of $R_t$ at different time points for the col-PCSMC-AS implementation and JAGS}
  \label{supl:fig:R_ACF}
\end{figure}

\end{document}